\newtheorem{theorem}{Theorem}
\newtheorem{lemma}{Lemma}
\newtheorem{prob}{Problem}
\theoremstyle{definition}
\newtheorem{deff}{Definition}
\newtheorem{prop}{Property}
\newtheorem{propp}{Proposition}
\newcommand{\shrinkfig}{\vspace{-0.3cm}}
\title{Scalable Bicriteria Algorithms for the Threshold Activation Problem in Online Social Networks
\thanks{\footnotesize \textcopyright 2017 IEEE.
Personal use of this material is permitted. Permission 
from IEEE must be 
obtained for all other uses, in any current or 
future media, including 
reprinting
/republishing this material for 
advertising or promotional purposes,
creating new 
collective works,
or resale or 
redistribution to servers or lists, or reuse of any 
copyrighted 
component of this work in other works.\hfill }}
 \author{Alan Kuhnle, Tianyi Pan, Md Abdul Alim, and My T. Thai\\
  Department of Computer \& Information Science \& Engineering\\
  University of Florida\\
  Gainesville, Florida 32611\\
  Email: \{kuhnle, tianyi, alim, thai\}@cise.ufl.edu }
\begin{document}

\addtolength{\topmargin}{0.04in} \maketitle 
\begin{abstract} 
We consider the Threshold Activation Problem (TAP): 
given social network $G$ and positive 
threshold $T$, find a minimum-size seed set $A$ that can 
trigger expected activation of at least $T$. 
We introduce the first scalable, parallelizable algorithm
with performance guarantee 
for TAP suitable for datasets with millions of nodes
and edges; we exploit the bicriteria nature of 
solutions to TAP to allow the user to control the running
time versus accuracy of our algorithm through a parameter
$\alpha \in (0,1)$: 
given $\eta > 0$, with probability $1 - \eta$ our algorithm
returns a solution $A$ with expected activation greater than
$T - 2 \alpha T$, and the size of the solution $A$ is within 
factor $1 + 4 \alpha T + \log ( T )$
of the optimal size. The algorithm runs in time
$O \left( \alpha^{-2}\log \left( n / \eta \right) (n + m) |A| \right)$,
where $n$, $m$, refer to the number of nodes, edges
in the network. The performance guarantee holds
for the general triggering model of internal influence
and also incorporates external influence, provided 
a certain condition is met on the cost-effectivity of seed
selection. 
\end{abstract}
\section{Introduction} \label{sect:intro}
With the growth of online social networks,
viral marketing where influence
spreads through a social network 
has become a central research topic. Users of a social
network can \emph{activate} their friends by
influencing them to adopt certain behaviors or products.
In this context, the \emph{influence maximization} (IM) problem
has been studied extensively \cite{Kempe2003,dinh2014cost,Cohen2014a,Borgs2014,zhang2013maximizing,zhang2016least,dinh2012cheap,Zhang2016}:
given a budget, the IM problem is to find a \emph{seed set}, or
set of initially activated users, within the budget
that maximizes the expected influence.
Much recent work \cite{Borgs2014,Tang2015,Tang2014,Cohen2014a}
has developed scalable algorithms
for IM that are capable
of running on social networks with millions of nodes while retaining
the provable guarantees on the quality of solution; namely, that the 
algorithm for
IM will produce a solution with expected influence within 
$1 - 1/e$ of the optimal activation.

However, a company with a specific target in mind may adopt a more flexible
approach to its budget: instead of having a fixed budget $k$ for the seed 
set, it is natural to minimize
the size of the seed set while activating
a desired threshold $T$ of users within the network. For example, suppose 
a company desires a certain level of exposure on social media; such
exposure could boost the sales of any of its products. Alternatively, suppose
a profit goal $T$ for a product must be met with the least expense possible.
Thus, we consider the following \emph{threshold activation} problem (TAP):
given a threshold $T$, minimize the size of the set of seed users in order to activate at least
$T$ users of the network in expectation. 

Goyal et al. \cite{Goyal2013} provided bicriteria performance guarantees for
a greedy approach to TAP based upon Monte Carlo sampling at each 
iteration to select the best seed node, an algorithm 
reminiscent of the greedy algorithm for IM in
Kempe et al. \cite{Kempe2003}; this approach is inefficient
and impractical for large networks.
Although TAP is related to IM, scalable solutions that already exist
for IM are unsuitable for TAP: the TIM \cite{Tang2014} and
IMM \cite{Tang2015} algorithms require knowledge of 
the size $k$ of the seed set ahead of time; 
the SKIM algorithm \cite{Cohen2014a}
for average reachability has been shown to be effective for IM
in specialized settings, 
but it is unclear how to apply
SKIM to more general situations or to TAP while retaining performance guarantees.

Moreover, empirical studies have shown that in the viral
marketing context, it is insufficient to consider internal
propagation alone; 
external activation, i.e.
activations that cannot be explained by the network
structure, play a large role 
in influence propagation events 
\cite{Goel2015,Gonzalez-Bailon2013,Myers2014,Myers2012}, 
and recent works on scalable algorithms for IM 
\cite{Tang2014, Tang2015, Cohen2014a} have neglected the consequences
of external influence.
For internal diffusion, two basic models
have been widely adopted, the independent cascade and linear 
threshold models; Kempe et al. \cite{Kempe2003} showed these
two models are special cases of the \emph{triggering model},
a powerful, general model that
has desirable properties in a
viral marketing context. 

Motivated by the above observations, the main contributions of this work are:
\begin{itemize}
  \item
    We establish a new connection between the 
    triggering model
    and a concept of generalized reachability that
    allows a natural combination 
    of external influence
    with the triggering model. We show any instance of 
    the triggering
    model combined with any model of 
    external influence is monotone
    and submodular.
  \item
    We show how to use the generalized reachability 
    framework to efficiently estimate 
    the expected influence of the triggering model
    combined with external influence, leveraging
    scalable estimators of average reachability
    by Cohen et al.
    \cite{Cohen1997,Cohen2014a,Cohen2009}. 
    This efficient estimation results
    in a parallelizable algorithm (STAB)
    for TAP 
    with performance guarantee 
    in terms of user-adjustable trade-offs between efficiency
    and accuracy. The desired accuracy is input
    as parameter $\alpha \in (0,1)$ which determines
    running time as $O \left( \alpha^{-2}\log (n / \eta ) (n + m) |A| \right)$,
    where $n, m$ are number of nodes, edges in the network, and
    $A$ is the seed set returned by STAB. With probability
    $1-\eta$, the expected activation is guaranteed to be 
    with $2 \alpha T$ of threshold $T$,
    and the
    size of the seed set $A$ is guaranteed to
    be within factor    
    $1 + 4 \alpha T + \log ( T )$ of the
    optimal size. If the cost-effectivity of seed
    selection
    falls below 1, this performance bound may not hold;
    we provide a looser bound for this case.

  \item 
    Through a comprehensive set of experiments, 
    we demonstrate that on large networks,
    STAB not only returns a better solution to TAP,
    but it runs faster than existing algorithms for 
    TAP and algorithms for IM adapted
    to solve TAP, often by factors of more than $10^3$ even for 
    the state-of-the-art IMM algorithm \cite{Tang2015}.
    In addition, we investigate the effect of
    varying levels of external influence on the solution 
    of STAB.

\end{itemize}
The rest of this paper is organized as follows: in Section
\ref{label:model}, we introduce models of influence, including
the triggering model and our concept of generalized reachability.
We prove these two concepts are equivalent.  In Section
\ref{sect:TA}, we formally define TAP and first prove
bicriteria guarantees in a general setting. 
Next, we employ the generalized reachability
concept to show how combination of triggering model and
external influence can be estimated efficiently, and we
present and analyze STAB, our
scalable bicriteria algorithm.
In Section \ref{sect:exp}, we analyze STAB
experimentally and compare with prior work. 
We discuss related work in Section 
\ref{sect:rw}. \vspace{-0.2cm}
\section{Models of influence}\label{label:model}
A social network can be modeled as a directed
graph $G =(V,E)$, where $V$ is the set of users
and directed edges $(u, v) \in E$ denote social connections,
such as friendships, between the users 
$u, v$. In this work, we study the propagation of influence
through a social network; for example,
say a user on the Twitter network posts a message to
her account; this message may be reposted by the friends
of this user, and as such it propagates across the social network.
In order to study such events from a theoretical standpoint,
we require the concept of a \emph{model of influence propagation}.

Intuitively, the idea of a model 
of influence propagation in a network is
a way by which nodes can be activated 
given a set of seed nodes. In this work, we 
use $\sigma$ to denote a model of influence propagation.
Such a model is usually probabilistic, and
the notation $\sigma(S)$ will denote the
expected number of activations under the model $\sigma$
given seed set $S \subseteq V$.  Kempe et al.  \cite{Kempe2003}
studied
a variety of models in their seminal work on influence
propagation on a graph, including
the Independent Cascade (IC) and Linear Threshold (LT)
models.
For completeness, we briefly describe these two models. 
An instance 
of influence propagation on a graph $G$ follows the IC model if a weight can be assigned
to each edge such that the propagation probabilities can be computed as
follows: once a node $u$ first becomes active, 
it is given a single chance to activate each currently inactive neighbor $v$ with probability proportional to 
the weight of the edge $(u,v)$. 
In the LT model each network user $u$ has an associated threshold $\theta(u)$ chosen uniformly from $[0,1]$ which determines how much influence (the sum of the weights of incoming edges) is required to activate $u$. $u$ becomes active if the total influence from its activated neighbors exceeds the threshold $\theta(u)$.

These well-studied models are both examples of the \emph{Triggering Model},
also introduced in \cite{Kempe2003}:
    Each node $v \in G$ independently chooses a random ``triggering
    set'' $T_v$ according to some probability distribution over subsets of its neighbors.
    A seed set $S$ is activated at time $t = 0$; a node $v$ becomes active at
    time $t$ if any node in $T_v$ is active at time $t - 1$.

Two important properties that it is desirable for 
a model $\sigma$ of influence propagation to satisfy
are firstly the \emph{submodularity property} of the
expected activation function: for any sets $S, T \subseteq V$, $\sigma (S \cup T) + \sigma (S \cap T) \le \sigma (S) + \sigma (T)$, and secondly the \emph{monotonicity property}: 
if $S \subseteq T \subseteq V$,
$\sigma (S) \le \sigma (T).$
These properties together allow 
a greedy approach to have a performance
ratio for the influence maximization problem (IM): given $k$, 
find a seed set $A$ of size $k$ such that the expected activation
of $A$ is maximized. Kempe et al. \cite{Kempe2003} 
showed that the triggering
model is both submodular and monotonic. 
Both properties are also important in proving performance
guarantees for TAP, the problem studied in this work
and defined in Section \ref{sect:TA}.

It is $\# \mathbf{P}$-hard to compute the exact influence
of a single seed node under even the restricted version of IC
where each edge is assigned the probability $0.5$ \cite{Chen2010a}.
Therefore, it is necessary to estimate the value of
$\sigma (S)$ by sampling the probability distribution
determined by $\sigma$. Sampling efficiently such that
the estimated value $\hat{ \sigma }(S)$ 
satisfies $| \hat{ \sigma }(S) - \sigma (S) | < \varepsilon$ for all
seed sets is a difficult
problem; we discuss 
this problem when $\sigma$ is
a combination of the
triggering model and external influence 
in Section \ref{sect:est_sigma}.
Because of the errors associated with estimating 
the value $\sigma (S)$,
we introduce slightly
generalized versions of the above two properties.  First,
let us define
$\Delta_x \sigma(A) = \sigma \left( A \cup \{ x \} \right) - \sigma (A)$,
for any model $\sigma$, and subset $A \subset V$.
The following property
is equivalent to $\sigma$ satisfying submodularity and monotonicity
together.
\begin{prop}[Submodularity and monotonicity] \label{prop:sm_mono}
For all $A \subseteq B$, and $x \in V$,
$\Delta_x \sigma (A) \ge \Delta_x \sigma (B).$
\end{prop}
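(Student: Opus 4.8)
The plan is to read the proposition as an equivalence --- the marginal-gain inequality $\Delta_x \sigma (A) \ge \Delta_x \sigma (B)$ for all $A \subseteq B$ and all $x \in V$ holds if and only if $\sigma$ is both submodular and monotone --- and to establish the two implications separately. The single unifying device for the whole argument is a case split on the location of the element $x$ relative to the nested pair $A \subseteq B$, namely whether $x \notin B$, $x \in B \setminus A$, or $x \in A$.

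For the forward direction I would assume submodularity and monotonicity and derive the inequality case by case. When $x \notin B$, I would apply the set-function submodularity inequality $\sigma (S \cup T) + \sigma (S \cap T) \le \sigma (S) + \sigma (T)$ to the pair $S = A \cup \{x\}$ and $T = B$; since $A \subseteq B$ and $x \notin B$, one checks that $S \cup T = B \cup \{x\}$ and $S \cap T = A$, and rearranging the resulting inequality gives exactly $\Delta_x \sigma (A) \ge \Delta_x \sigma (B)$. When $x \in B$ the right-hand side satisfies $\Delta_x \sigma (B) = 0$, so the claim collapses to $\Delta_x \sigma (A) \ge 0$, which is immediate: it is monotonicity applied to $A \subseteq A \cup \{x\}$ if $x \notin A$, and it is the trivial inequality $0 \ge 0$ if $x \in A$.

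For the reverse direction I would assume the inequality and recover both properties. Monotonicity for a single added element follows by specializing the inequality to the pair $A \subseteq B = A \cup \{x\}$, since then $\Delta_x \sigma (B) = 0$ forces $\Delta_x \sigma (A) \ge 0$; chaining such single-element steps along any path from a set to one of its supersets yields full monotonicity. Submodularity follows by restricting attention to $x \notin B$, which is the diminishing-returns form, and then reconstructing the set-function inequality for arbitrary $S, T$ via a telescoping argument: enumerate $S \setminus T = \{y_1, \dots, y_k\}$, apply the marginal-gain inequality to each nested pair $(S \cap T) \cup \{y_1, \dots, y_i\} \subseteq T \cup \{y_1, \dots, y_i\}$ with the element $y_{i+1}$, and sum. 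I expect this telescoping step to be the only place requiring real care, since one must verify at each stage that the smaller set is contained in the larger and that the added element lies outside the larger set, so that the hypothesis applies. The genuinely new content beyond the textbook submodularity characterization is merely that allowing $x \in B$, rather than restricting to $x \in V \setminus B$, is precisely what folds the monotonicity condition into the single uniform inequality.
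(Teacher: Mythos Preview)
Your proof is correct. The paper, however, does not actually prove Property~\ref{prop:sm_mono}: it simply asserts, in the surrounding text, that ``the following property is equivalent to $\sigma$ satisfying submodularity and monotonicity together'' and moves on. The statement is treated as a standard definitional equivalence that the reader is expected to know.

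What you have written is the usual textbook argument for that equivalence, and it is sound in both directions. Your case split on the location of $x$ cleanly isolates where monotonicity is needed (the case $x \in B$), and your telescoping reconstruction of the set-function inequality from diminishing returns is the standard way to close the loop. So you are not diverging from the paper's approach so much as supplying a proof where the paper supplies none.
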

Let us define a $\sigma$ to be $\varepsilon$-approximately
monotonic and submodular if
the following property is satisfied
instead:
\begin{prop}[$\varepsilon$-submodularity and monotonicity] \label{prop:approx_sm_mono}
Let $\varepsilon > 0$. For all $A \subseteq B$, and $x \in V$,
$\varepsilon + \Delta_x \sigma(A) \ge \Delta_x \sigma(B).$
\end{prop}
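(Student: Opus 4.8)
The plan is to read the final statement for what makes it non-vacuous: taken literally, Property~\ref{prop:approx_sm_mono} is a condition that a function may or may not satisfy, and for the exact model $\sigma$ it would follow trivially from Property~\ref{prop:sm_mono} (since $\Delta_x\sigma(A)\ge\Delta_x\sigma(B)$ already implies $\varepsilon+\Delta_x\sigma(A)\ge\Delta_x\sigma(B)$ for any $\varepsilon>0$). The content the paper actually needs — motivated by the remark just before the statement that estimation introduces errors — is that the \emph{estimator} $\hat\sigma$, which generally fails the exact condition of Property~\ref{prop:sm_mono}, nonetheless satisfies the $\varepsilon$-relaxed condition. I would therefore phrase and prove the target as an implication: if the exact model $\sigma$ is monotone and submodular (i.e.\ satisfies Property~\ref{prop:sm_mono}) and the estimator is uniformly accurate, $|\hat\sigma(S)-\sigma(S)|\le\delta$ for every $S\subseteq V$, then $\hat\sigma$ satisfies Property~\ref{prop:approx_sm_mono} with $\varepsilon=4\delta$.

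The heart of the argument is a short propagation of the additive error through the marginal operator $\Delta_x$. First I would bound one marginal of $\hat\sigma$ against the corresponding marginal of $\sigma$: writing $\Delta_x\hat\sigma(A)=\hat\sigma(A\cup\{x\})-\hat\sigma(A)$ and applying the accuracy bound with opposite signs to the two terms gives $\Delta_x\sigma(A)-2\delta\le\Delta_x\hat\sigma(A)\le\Delta_x\sigma(A)+2\delta$, and likewise for $B$; the error doubles because each marginal is a difference of two estimated quantities. Then for any $A\subseteq B$ and $x\in V$ I would chain these bounds through the exact diminishing-returns inequality supplied by Property~\ref{prop:sm_mono}: $\Delta_x\hat\sigma(B)\le\Delta_x\sigma(B)+2\delta\le\Delta_x\sigma(A)+2\delta\le\Delta_x\hat\sigma(A)+4\delta$, which rearranges to exactly $\varepsilon+\Delta_x\hat\sigma(A)\ge\Delta_x\hat\sigma(B)$ with $\varepsilon=4\delta$. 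As a consistency check I would note that specializing to $B=A\cup\{x\}$, so that $\Delta_x\hat\sigma(B)=0$, recovers $\Delta_x\hat\sigma(A)\ge-\varepsilon$; thus the single inequality of Property~\ref{prop:approx_sm_mono} simultaneously encodes an $\varepsilon$-relaxed monotonicity and an $\varepsilon$-relaxed submodularity, mirroring how the exact form packaged both properties at once and justifying the name.

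The main obstacle is not this algebra but securing its hypothesis. The inequality is quantified over \emph{all} pairs $A\subseteq B$ and all $x$, so I need $|\hat\sigma-\sigma|\le\delta$ to hold uniformly over the relevant subsets, not merely in expectation or at one fixed set; and controlling this uniformly over all $2^{|V|}$ subsets is precisely the difficult sampling problem flagged before the statement and deferred to Section~\ref{sect:est_sigma}. The practical escape I would make explicit is that the greedy procedure only ever evaluates $\hat\sigma$ on the polynomially many sets it actually queries, so a Hoeffding/Chernoff bound for a single set together with a union bound over those queries — paying the $\log(n/\eta)$ factor that surfaces in the running time — suffices to guarantee Property~\ref{prop:approx_sm_mono} with probability $1-\eta$ at every point where the analysis of STAB invokes it. Choosing the sample count large enough that $4\delta$ meets the target $\varepsilon$ is then a routine concentration calculation, which I would keep to the estimation section rather than here.
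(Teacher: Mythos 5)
You correctly recognized that Property~\ref{prop:approx_sm_mono} is not a theorem in the paper at all but a definition (``Let us define a $\sigma$ to be $\varepsilon$-approximately monotonic and submodular if\ldots''), so the paper offers no proof, and the only substantive content to supply is the bridging lemma you reconstructed: that the estimator $\hat\sigma$ inherits the relaxed property from the exact $\sigma$. Your argument --- per-set error $\delta$ doubles to $2\delta$ per marginal, chains through Property~\ref{prop:sm_mono} to give $\varepsilon=4\delta$, and is made uniform by a union bound over the polynomially many sets the greedy algorithm actually queries --- is precisely the reasoning the paper leaves implicit in the proof of Theorem~\ref{thm:guarantee} (where $\hat\sigma$ is asserted to be $4\alpha T$-approximately submodular on the sets considered by the greedy algorithm) together with the $n^{3}$-set union bound of Section~\ref{sect:est_sigma}, so your proposal is correct and matches the paper's (implicit) approach.
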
 
\subsection{Triggering model from the perspective of generalized reachability}
Next we define a class of influence propagation models
that naturally generalize the notion of reachability in
a directed graph: that is, these models generalize 
the simple model that a node is activated by
a seed set $S$ if it
is reachable from the seed set by edges in the graph.
Somewhat surprisingly, the triggering model is equivalent
to this notion of generalized reachability.

Suppose instead of a single directed graph
$G$, we have a set of directed graphs
$\{(G_i = (V, E_i), p_i)\}_{i=1}^l$ 
on the same vertex set $V$, and associated probabilities
such that $\sum_{i=1}^l p_i = 1$. 
Then define an influence propagation model
$\sigma$ in the following way: when a seed set $S$
is activated, graph $G_i$ is chosen with probability
$p_i$. Then, influence is allowed to propagate from seed
set $S$ in $G_i$ according to the directed edges of
$G_i$. Let $\sigma_i (S)$
be defined as the number of vertices reachable from $S$
in $G_i$. Then the expected activation of
a seed set $S$ is given by
$\sigma (S)  = \sum_{i=1}^l p_i \sigma_i (S).$
We will term a model $\sigma$ of this form 
a model of \emph{generalized reachability},
since it generalizes the notion of simple
reachability on a directed graph. We have
the following important proposition:
\begin{propp}
  Generalized reachability is equivalent
  to the Triggering Model.
  \label{reach_trig}
\end{propp}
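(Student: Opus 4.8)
The plan is to prove that the two models induce the same expected-activation function $\sigma$ by exhibiting the standard \emph{possible-worlds} correspondence between random triggering sets and a distribution over deterministic graphs, and then checking that the time-indexed triggering dynamics compute exactly reachability in the sampled graph. The correspondence is a bijection on the level of outcomes, so the two expected-activation functions will agree seed set by seed set.

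First I would treat the direction from the Triggering Model to generalized reachability, which I expect to be the substantive and cleanest half. Fix a triggering instance, in which each node $v$ independently draws $T_v$ from its distribution over subsets of its in-neighbors. A joint outcome $(T_v)_{v \in V}$ determines a deterministic directed graph $H$ by inserting the edge $(u,v)$ exactly when $u \in T_v$; because the draws are independent across nodes, the induced distribution over such graphs is the product of the per-node triggering distributions. Enumerating the finitely many possible graphs as $G_1, \dots, G_l$ and letting $p_i$ be the product probability of $G_i$ gives $\sum_i p_i = 1$. It then remains to show that, conditioned on $H = G_i$, the triggering dynamics activate precisely the vertices reachable from $S$ in $G_i$. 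I would argue this by induction on the time step $t$: the set active at time $0$ is the seed set $S$, and a vertex $v$ first becomes active at time $t$ exactly when some member of $T_v$, i.e. some in-neighbor of $v$ in $G_i$, was active at time $t-1$, which is precisely the condition for $v$ to lie one breadth-first layer deeper. Hence the active set stabilizes at the reachable set, and taking expectations over $i$ yields $\sigma(S) = \sum_i p_i \sigma_i(S)$.

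Conversely, given a generalized-reachability instance $\{(G_i, p_i)\}$, I would realize it as a triggering model by declaring each node's triggering set to be its set of in-neighbors in the sampled graph; the same breadth-first induction shows reachability in the sampled graph again coincides with the triggering dynamics. The main obstacle lies exactly here, and I would flag it explicitly: the Triggering Model draws the sets $T_v$ \emph{independently} across nodes, whereas a generalized-reachability instance samples a single graph $G_i$ and may thereby correlate the in-neighborhoods of distinct vertices. The reverse inclusion therefore reduces to whether replacing the joint law of in-neighborhoods by the product of its per-node marginals leaves $\sigma$ unchanged. This is automatic for the distributions produced by the possible-worlds construction above, since those are products by design, and so the two notions coincide on precisely this class; I expect the delicate point to be verifying that the natural per-node read-off returns a genuinely independent triggering instance, rather than one that silently relies on cross-vertex correlation to reproduce the reachability counts.
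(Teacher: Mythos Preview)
Your forward direction (triggering to generalized reachability) is exactly the paper's construction, augmented by the BFS induction that the paper leaves implicit; nothing to add there.

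For the converse, you are right to be uneasy, and you are in fact more careful than the paper. The paper simply reads off the per-node marginal distribution of in-neighbors---assigning $p(T_v)$ to be the total probability of graphs in which $v$'s in-neighborhood equals $T_v$---and then asserts that ``we have an instance of the triggering model,'' without checking that expected activation is preserved. Your flag is precisely on target: passing to the product of marginals can change $\sigma$. A three-node example suffices. Take $V=\{a,b,c\}$; let $G_1$ (probability $1/2$) have edges $(a,b)$ and $(b,c)$, and let $G_2$ (probability $1/2$) have no edges. Then the generalized-reachability activation from $\{a\}$ is $\tfrac{1}{2}\cdot 3 + \tfrac{1}{2}\cdot 1 = 2$, whereas the marginal triggering model (in which $T_b=\{a\}$ and $T_c=\{b\}$ each hold independently with probability $1/2$) gives $1 + \tfrac{1}{2} + \tfrac{1}{4} = \tfrac{7}{4}$. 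So the paper's converse argument does not establish equality of $\sigma$, and your concern about cross-vertex correlation is a genuine obstruction, not a technicality. What is unambiguously true---and all that the rest of the paper actually uses---is the forward inclusion: every triggering instance is a generalized-reachability instance, which is what drives the sampling and estimation machinery in Section~\ref{sect:est_sigma}.
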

\begin{proof} 
  Suppose we have an instance $\sigma$ of the
  triggering model. For each node $v$, triggering
  set $T_v$ is chosen independently with 
  probability $p(T_v)$. If all nodes choose a triggering
  set, then define graph $H_T$ for this choice by
  adding directed edges $(u, v)$ for each $u \in T_v$.
  Assign graph $H_T$ probability $\prod_v p( T_v )$
  and we have instance of generalized reachability
  with the same expected activation.

  Conversely, suppose $\sigma$ is an instance of
  generalized reachability. For $v \in V$, 
  let $T_v$ be any subset
  of nodes excluding $v$ itself. Assign $p(T_v)$
  to be the sum of the probabilities of the graphs
  in which the in-neighbors of $v$ are exactly
  the set $T_v$. Then we have an instance of the
  triggering model. 
\end{proof}

\subsection{External influence}
In this section, we outline our model of
external influence $\sigma_{ext}$ in a social network.
We wish to capture the idea that users in the network
may be activated by a source external to the network;
that is, these activations do not occur through friendships
or connections within the social network.
The most general model of external influence 
is simply an arbitrary probability distribution
on the set of subsets of nodes. That is, for any $S \subset V$,
there is a probability $p_S$ that $S$ is activated
from an external source. In this work, we adopt this
model and denote such a model
of external influence by $\sigma_{ext}$. In order to 
consider both external and internal influence in
our social networks, we next define the concept of combining
models of influence together.

\begin{deff}[Combination of $\sigma_1, \sigma_2$]
  Let $\sigma_1$ and $\sigma_2$ be two models of influence
  propagation. We define the combination $\sigma$ of these
  two models in the following way: At any timestep $t$, if 
  set $A_t \subset V$ is activated
  new nodes may be activated from $A_t$ according to either
  $\sigma_1$ or $\sigma_2$; that is, if in the next timestep
  $\sigma_1$ activates $A^1_{t + 1}$ and $\sigma_2$ activates
  $A^2_{t + 1}$, then $A_{t+1} \equiv A^1_{t+1} \cup A^2_{t+1}$.
  We denote the combination model
  as $\sigma_1 \oplus \sigma_2$ and
  write $\sigma_1 \oplus \sigma_2 (A)$ 
  for the expected number of activations
  resulting from seeding $A$ under this model.
\end{deff}
In this work, we are most interested in combining 
external activation with the triggering model. That is,
if $\sigma_{ext}$ is a model of external influence as
defined above, and $\sigma_{trig}$ is an instance of the
triggering model, then we consider 
$\sigma_{trig} \oplus \sigma_{ext}$.
\subsubsection{Submodularity of $\sigma_{trig} \oplus \sigma_{ext}$}
In this subsection, we establish the submodularity and monotonicity
of $\Phi \equiv \sigma_{trig} \oplus \sigma_{ext}$. First, we require the following proposition.
\begin{propp} \label{prop:rel_act}
  Let $\sigma$ be a submodular, monotone-increasing
  model of influence propagation.
  Define $\sigma_A (T)$ to be the expected influence
  of seeding set $T$ when $A$ is already
  activated; that is $\sigma_A (T) = \sigma (A \cup T)$.
  Then, for any $A$, $\sigma_A$ is submodular and monotone increasing.  
\end{propp}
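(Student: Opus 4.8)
The plan is to verify the two desired properties of $\sigma_A$ one at a time, in each case reducing a statement about $\sigma_A$ to the corresponding property of $\sigma$ evaluated at sets that have been enlarged by the fixed set $A$. Monotonicity is immediate and I would dispatch it first: if $S \subseteq T$ then $A \cup S \subseteq A \cup T$, so monotonicity of $\sigma$ gives $\sigma(A \cup S) \le \sigma(A \cup T)$, which is exactly $\sigma_A(S) \le \sigma_A(T)$. No subtlety arises here, since unioning with a fixed $A$ preserves containment.

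The substantive part is submodularity. I would start from the pairwise form $\sigma(X \cup Y) + \sigma(X \cap Y) \le \sigma(X) + \sigma(Y)$ and instantiate it at $X = A \cup S$ and $Y = A \cup T$ for arbitrary $S, T \subseteq V$. The idea is then to rewrite the two left-hand terms using set identities: $(A \cup S) \cup (A \cup T) = A \cup (S \cup T)$ and, crucially, the distributive identity $(A \cup S) \cap (A \cup T) = A \cup (S \cap T)$. Substituting these back turns the instantiated inequality into $\sigma_A(S \cup T) + \sigma_A(S \cap T) \le \sigma_A(S) + \sigma_A(T)$, which is precisely submodularity of $\sigma_A$.

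The main (and essentially only) obstacle is the distributive identity $(A \cup S) \cap (A \cup T) = A \cup (S \cap T)$; everything else is bookkeeping, so I would take care to justify that step cleanly, as it is what lets the submodular inequality for $\sigma$ transfer verbatim to $\sigma_A$. Alternatively, a slicker route avoids set algebra entirely by using the marginal-gain formulation of Proposition \ref{prop:sm_mono}. Observing that $\Delta_x \sigma_A(S) = \sigma(A \cup S \cup \{ x \}) - \sigma(A \cup S) = \Delta_x \sigma(A \cup S)$, and that $S \subseteq T$ implies $A \cup S \subseteq A \cup T$, the inequality $\Delta_x \sigma(A \cup S) \ge \Delta_x \sigma(A \cup T)$ furnished by Proposition \ref{prop:sm_mono} transfers directly to $\Delta_x \sigma_A(S) \ge \Delta_x \sigma_A(T)$. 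I would likely present this marginal-gain argument as the primary proof, since it reproduces the Proposition \ref{prop:sm_mono} form for $\sigma_A$ in a single line and sidesteps the intersection identity, with the monotonicity check above retained as the one remaining ingredient.
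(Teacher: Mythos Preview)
Your proposal is correct, and your first route---applying the pairwise submodularity inequality with $X = A \cup S$, $Y = A \cup T$ and the distributive identity $(A \cup S) \cap (A \cup T) = A \cup (S \cap T)$---is exactly the paper's own proof (the paper treats monotonicity as ``similar'' and omits it). Your alternative marginal-gain argument via Property~\ref{prop:sm_mono} is also valid but is a minor stylistic variation rather than a genuinely different approach.
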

\begin{proof}
  Let $T_1, T_2 \subseteq V$. We prove submodularity only,
  monotonicity is similar.
  \begin{align*}
    &\sigma_A (T_1 \cup T_2) + \sigma_A (T_1 \cap T_2) \\
    &= \sigma (T_1 \cup T_2 \cup A) + \sigma (T_1 \cap T_2 \cup A ) \\
    &= \sigma ( (T_1 \cup A) \cup (T_2 \cup A) ) + \sigma ( (T_1 \cup A) \cap (T_2 \cup A) ) \\
    &\le \sigma (A \cup T_1) + \sigma (A \cup T_2) \\
    &= \sigma_A (T_1) + \sigma_A (T_2). \qedhere
  \end{align*}
\end{proof}
\begin{theorem}
  Let 
  $\sigma_{trig}$ be any instance of the Triggering Model, and
  $\sigma_{ext}$ be any instance of external influence. Then,
  $\Phi \equiv \sigma_{trig} \oplus \sigma_{ext}$ is submodular
  and monotonic.
\end{theorem}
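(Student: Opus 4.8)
The plan is to prove the result by conditioning on the realization of the external influence and reducing the combined dynamics to the pure triggering model acting on an enlarged seed set. Since $\sigma_{ext}$ is an arbitrary distribution that activates a subset $R \subseteq V$ with probability $p_R$ \emph{independently} of the chosen seeds, I would first fix such a realized external set $R$ and argue that, conditioned on this event, the combination $\Phi$ behaves exactly like $\sigma_{trig}$ seeded with $S \cup R$. Writing $(\sigma_{trig})_R$ for the relative-activation function $S \mapsto \sigma_{trig}(S \cup R)$ of Proposition \ref{prop:rel_act}, this would yield the representation $\Phi(S) = \sum_{R \subseteq V} p_R\, \sigma_{trig}(S \cup R) = \sum_{R} p_R\, (\sigma_{trig})_R(S)$, exhibiting $\Phi$ as a nonnegative convex combination of the functions $(\sigma_{trig})_R$.

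The crux — and the step I expect to require the most care — is justifying the conditional reduction $\Phi(S \mid R) = \sigma_{trig}(S \cup R)$. The combination model interleaves external and internal activations at every timestep, so one must check that this interleaving does not alter the final activated set. The key observation is that in the triggering model a node's \emph{eventual} activation depends only on \emph{which} nodes ever served as sources, not on the timestep at which they became active: by Proposition \ref{reach_trig} we may fix a graph realization of $\sigma_{trig}$, and in a fixed directed graph the set of nodes ultimately activated is precisely the reachability closure of the source set, independent of activation times. Hence, once the external set $R$ is injected, whether at $t=0$ or later, the final activated set is the closure of $S \cup R$ in that realization; averaging over realizations gives $\sigma_{trig}(S \cup R)$, establishing the representation above.

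With the representation in hand, the remainder is routine. Since the triggering model is monotone and submodular (Kempe et al. \cite{Kempe2003}), Proposition \ref{prop:rel_act} shows each relative-activation function $(\sigma_{trig})_R$ is itself monotone and submodular. Finally, I would invoke the standard fact that a nonnegative linear combination of monotone submodular functions is again monotone and submodular, which is immediate from Proposition \ref{prop:sm_mono} applied termwise: by linearity $\Delta_x \Phi(A) = \sum_R p_R\, \Delta_x (\sigma_{trig})_R(A)$, and for $A \subseteq B$ each factor satisfies $\Delta_x (\sigma_{trig})_R(A) \ge \Delta_x (\sigma_{trig})_R(B)$ with nonnegative weight $p_R$, so $\Delta_x \Phi(A) \ge \Delta_x \Phi(B)$. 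This verifies Proposition \ref{prop:sm_mono} for $\Phi$, establishing its monotonicity and submodularity and completing the proof.
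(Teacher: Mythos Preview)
Your proposal is correct and follows essentially the same route as the paper: condition on the externally activated set $R$, write $\Phi(S)=\sum_{R}p_R\,\sigma_{trig}(S\cup R)$, invoke Proposition~\ref{prop:rel_act} to see each summand is monotone submodular, and conclude by nonnegative combination. The only difference is that you supply an explicit justification (via the reachability equivalence of Proposition~\ref{reach_trig}) for why the interleaved dynamics reduce to $\sigma_{trig}(S\cup R)$, a step the paper simply asserts.
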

\begin{proof}
  For any $A \subseteq V$, let $p_A$ be the probability that
  $A$ is activated via $\sigma_{ext}$, the external influence.
  Then, 
  $\Phi(S) = \sum_{A \subset V} p_A \sigma_{trig} ( A \cup S ).$
  By Prop. \ref{prop:rel_act}, $\sigma_{trig, A} (S) = \sigma_{trig} (A \cup S)$
  for fixed $A$ is monotone and submodular. Since a non-negative 
  combination of submodular and monotonic 
  set functions is also submodular and
  monotonic, the result follows. 
\end{proof}

\section{Threshold Activation Problem} \label{sect:TA}
The framework has been established to define
the problem we consider in this work. 
We suppose a company wants to minimize the
number of seed users while expecting a certain 
level of activation 
in the social network.
Formally, we have
\begin{prob}[Threshold activation problem (TAP)] \label{prob:TA}
  Let $G = (V,E)$ be a social network, $n = |V|$.
  Given influence propagation model $\sigma$ 
  and threshold $T$ such that
  $0 \le T \le n$,
  find a subset $A \subseteq V$ such that
  $|A|$ is minimized with $\sigma(A) \ge T$.
\end{prob}
First, we consider performance guarantees for
a greedy approach to TAP with only
the assumption that the influence 
propagation $\sigma$ is $\varepsilon$-submodular.
We do not discuss how to sample $\sigma$ for
these results.
Subsequently, we specialize to the case when
$\sigma$ is the the estimated value of the 
combination of the triggering model
and external influence, which is approximately submodular
up to the error of estimation, and we show how to
efficiently estimate to a desired accuracy
in Sections \ref{sect:est_sigma} and \ref{sect:est_tau}. We
detail the algorithm STAB in Section \ref{sect:alg_description}, 
a scalable algorithm with performance
guarantees for TAP utilizing this estimation and
analysis. \vspace{-0.3cm}

\subsection{Results when $\sigma$ is $\varepsilon$-submodular}
We analyze the greedy algorithm to solve TAP,
which adds a node that maximizes the marginal 
gain at each iteration to its chosen seed set
$A$ until $\sigma (A) \ge T$, at which point
it terminates. One might imagine in analogy
to the set cover problem that there would
be a $\log n$-approximation ratio for the 
greedy algorithm for TAP -- however, this result
only holds for \emph{integral} submodular
functions \cite{Goyal2013}. We next give a
bicriteria result for the greedy algorithm
that incorporates the error $\varepsilon$
inherent in $\varepsilon$-approximate submodularity
into its bounds.
In this context,
a bicriteria guarantee means that
the algorithm is allowed to violate the
constraints of the problem by a specified amount,
and also to approximate the solution to the problem.
In the viral marketing context, this means that
we may not activate the intended threshold $T$
of users, but we will guarantee to activate
a number close to $T$. Furthermore, we will not
achieve a solution of minimum size, but there
is a guarantee on how large the solution returned
could be.

\begin{theorem}
  Consider the TAP problem for $\sigma$ when
  $\sigma$ is $\varepsilon$-approximately
  submodular.
  Then the greedy algorithm that terminates when
  the marginal gain is less than $1$ 
  returns a solution $A$
  of size within factor 
  $1 + \varepsilon + \log \left( \frac{T}{OPT} \right)$ of the optimal solution
  size, $OPT$, and the solution $A$ satisfies
  $\sigma (A) \ge T - (1 + \varepsilon) \cdot OPT.$   \label{thm:apx_sm} 
\end{theorem}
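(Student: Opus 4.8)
The plan is to run the standard submodular-cover greedy analysis, but to carry the error term $\varepsilon$ through every step and to split the run into two phases so the final constant matches the stated factor. Write $A_0 = \emptyset, A_1, A_2, \dots$ for the sets produced by greedy, let $g_{i+1} = \sigma(A_{i+1}) - \sigma(A_i) = \max_{x} \Delta_x \sigma(A_i)$ be the marginal gain taken at step $i+1$, and let $D_i = T - \sigma(A_i)$ be the remaining deficit. Fix an optimal set $O = \{o_1, \dots, o_{OPT}\}$ with $\sigma(O) \ge T$. The engine of the whole proof is a single per-step inequality. Telescoping $\sigma(A_i \cup O) - \sigma(A_i)$ over the elements of $O$, bounding each term $\Delta_{o_j}\sigma(A_i \cup \{o_1,\dots,o_{j-1}\})$ by $\varepsilon + \Delta_{o_j}\sigma(A_i) \le \varepsilon + g_{i+1}$ via Property~\ref{prop:approx_sm_mono}, and using monotonicity to write $\sigma(A_i \cup O) \ge \sigma(O) \ge T$, yields $D_i \le \sigma(A_i\cup O) - \sigma(A_i) \le OPT\,(\varepsilon + g_{i+1})$.

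The activation guarantee is then immediate. The algorithm halts at some step $L$ with $g_{L+1} < 1$ (or earlier, once $\sigma(A_L) \ge T$, in which case the claim is trivial). Substituting $g_{L+1} < 1$ into the per-step inequality at $i = L$ gives $T - \sigma(A_L) \le OPT(\varepsilon + 1)$, that is $\sigma(A_L) \ge T - (1+\varepsilon)\,OPT$, as claimed.

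For the size bound I would convert the per-step inequality into a recurrence on the deficit. Since $g_{i+1} = D_i - D_{i+1}$, it rearranges to $D_{i+1} \le D_i(1 - 1/OPT) + \varepsilon$. The clean way to absorb the additive $\varepsilon$ is the substitution $E_i = D_i - \varepsilon\,OPT$, under which the recurrence collapses to the pure contraction $E_{i+1} \le E_i(1 - 1/OPT)$, so that $E_i \le E_0(1-1/OPT)^i \le T e^{-i/OPT}$ (using $E_0 = T - \varepsilon\,OPT \le T$ and $\sigma(\emptyset)=0$). I then split the run at the first index $i_0$ with $D_{i_0} \le (1+\varepsilon)\,OPT$, equivalently $E_{i_0} \le OPT$. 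In the first phase the algorithm cannot stop early, since $D_i > (1+\varepsilon)\,OPT$ forces $g_{i+1} \ge D_i/OPT - \varepsilon > 1$; and $E_{i_0-1} > OPT$ together with the geometric bound gives $i_0 \le OPT\log(T/OPT) + 1$. In the second phase every step actually taken has gain $g \ge 1$ and hence lowers the deficit by at least $1$, so starting from a deficit of at most $(1+\varepsilon)\,OPT$ at most $(1+\varepsilon)\,OPT$ further steps can occur before either the gain drops below $1$ or $\sigma(A)\ge T$. Summing the two phases gives $|A| = L \le OPT\log(T/OPT) + (1+\varepsilon)\,OPT$, i.e. the factor $1 + \varepsilon + \log(T/OPT)$ over $OPT$.

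The telescoping and the geometric recurrence are routine; the step that needs genuine care is the $\varepsilon$ bookkeeping, making the error enter additively as $1+\varepsilon$ rather than as a multiplicative blow-up — the shift $E_i = D_i - \varepsilon\,OPT$ is exactly what achieves this — together with the two-phase split that cleanly separates the $\log(T/OPT)$ contraction phase from the $(1+\varepsilon)\,OPT$ mop-up phase. I would also flag the single place where monotonicity, not merely submodularity, is invoked, namely $\sigma(A_i \cup O) \ge T$: Property~\ref{prop:approx_sm_mono} alone yields only $\Delta_x\sigma \ge -\varepsilon$, so the argument implicitly treats $\sigma$ as (exactly) monotone, which is the case for the reachability-based estimators used later; if $\sigma$ were only approximately monotone a lower-order correction would appear. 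Finally I would verify the boundary behaviour (in particular $T \ge OPT$, so that $\log(T/OPT) \ge 0$) and absorb the additive rounding constant from $i_0$ into the stated bound.
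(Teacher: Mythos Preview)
Your proof is correct and follows essentially the same route as the paper: both establish the per-step inequality $T-\sigma(A_i)\le OPT\,(g_{i+1}+\varepsilon)$ by telescoping over the optimal set, convert it into the recurrence $D_{i+1}\le D_i(1-1/OPT)+\varepsilon$, split the run at the first time the deficit drops below $(1+\varepsilon)\,OPT$, and bound the two phases by $OPT\log(T/OPT)$ and $(1+\varepsilon)\,OPT$ respectively. Your shift $E_i=D_i-\varepsilon\,OPT$ is a clean way to solve the recurrence, but it is equivalent to the paper's direct unrolling $D_i\le T(1-1/OPT)^i+\varepsilon\,OPT$; and your remark that exact monotonicity (not merely Property~\ref{prop:approx_sm_mono}) is used at the step $\sigma(A_i\cup O)\ge T$ is a correct observation that the paper leaves implicit.
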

\begin{proof} 
Let $A_i = \{a_1, \ldots, a_i \}$ be the
greedy solution after $i$ iterations, and let $A_g$ be the 
final solution
returned by the greedy algorithm. Let $o = OPT$ be the size of an optimal
solution $C = \{c_1, \ldots, c_{o} \}$ satisfying $\sigma ( C ) \ge T$.
Then
\begin{align}
  T - \sigma( A_i ) &\le \sigma(A_i \cup C) - \sigma(A_i) \nonumber \\
  &= \sum_{j = 1}^{o} \Delta_{c_j} \sigma \left( A_i \cup \{ c_1, \ldots, c_{j-1} \} \right) \nonumber \\
  &\le \sum_{j = 1}^{o} \Delta_{c_j} \sigma ( A_i ) + o\varepsilon \qquad \text{(by Property \ref{prop:approx_sm_mono})} \nonumber \\
  &\le o \cdot \left[ \sigma( A_{i + 1} ) - \sigma( A_i ) + \varepsilon \right]. \label{increase_bound}
\end{align}
Therefore, $T - \sigma( A_{i + 1} ) - \varepsilon \le \left( 1 - \frac{1}{o} \right) (T - \sigma( A_i ) ).$
Then
\begin{align}
  T - \sigma(A_i) &\le T \left( 1 - \frac{1}{o} \right)^i + \varepsilon \sum_{j=0}^{i - 1} \left( 1 - \frac{1}{o} \right)^j \nonumber \\
  &\le T \left( 1 - \frac{1}{o} \right)^i + \varepsilon o. \label{ineq_diff}
\end{align}
From here, there exists an $i$ such that the
following differences satisfy
\begin{align}
T - \sigma(A_{i}) &\ge o( 1 + \varepsilon) \text{, and} \label{ineq_diff2} \\
T - \sigma(A_{i + 1}) &< o( 1 + \varepsilon ), 
\label{ineq_size_bd}
\end{align}
Thus, by inequalities (\ref{ineq_diff}) and
(\ref{ineq_diff2}),
$o \le T \exp \left( \frac{-i}{o} \right),$
and 
$ i \le o \log \left( \frac{ T}{ o } \right).$
By inequality
(\ref{ineq_size_bd}) and the assumption on the
termination of the algorithm, the greedy
algorithm adds at most
$o(1 + \varepsilon )$ more elements, so
$g \le i + o(1 + \varepsilon ) \le o \left( 1 + \varepsilon  + \log \left( \frac{T}{o} \right) \right).$
Finally, if the algorithm terminates before $\sigma ( A_g ) \ge T$,
then the marginal gain is less than $1$. Hence, by
(\ref{increase_bound}), 
$\sigma ( A ) \ge T - (1 + \varepsilon) o.$ \end{proof}
Notice that the above argument requires only
  that $\sigma$ is a $\varepsilon$-submodular
  set function; in particular, it
  did not use the fact that $\sigma$ represents expected influence
  propagation on a social network.

\subsubsection{Approximation ratios} \label{sect:apx}
Next, we consider ways in which the
bicriteria guarantees of Theorem
\ref{thm:apx_sm} can be improved. 
In viral marketing, we may suppose
a company seeks to choose a 
threshold $T$ such that the marginal gain to reach
$T$ is always at least $1$; seeding nodes 
with a marginal gain of less than $1$ would
be cost-ineffective. In other words,
it would cost more to seed a node than
the benefit obtained from seeding it.
There is little point in activating 
$T$ users if the marginal gain drops
too low; intuitively, the company
has already activated as many as
it cost-effectively can.

We term this assumption
the \emph{cost-effectivity assumption} (CEA):
  In an instance of TAP, 
  if $B \subseteq V$ such that
  $\sigma (B) < T$,
  there always exists a node $u$ such that
  $\Delta_u \sigma (B) \ge 1$.
Under CEA,
the greedy algorithm in Theorem \ref{thm:apx_sm}
would be an approximation algorithm; that is,
it would ensure $\sigma (A) \ge T$, with the
same bound on solution size as stated in the theorem. 
To see this fact, notice that once
inequality (\ref{ineq_size_bd}) above
is satisfied, the algorithm must add
at most $o(1 + \varepsilon)$ more elements
before $\sigma (A) \ge T$, by CEA.

More generally, each node $v \in V$
has an associated \emph{reticence} $r_v \in [0,1]$;
$r_v$ is the probability that $v$ will remain
inactive even if all of the neighbors of $v$
are activated. Then we have the following
theorem, whose proof is analogous to Theorem
\ref{thm:apx_sm}.
\begin{theorem}
  Let $\sigma$ be $\varepsilon$-approximately
  submodular and let $r^* = \min_{v \in V} r_v$. Suppose
  $r^* > 0$. Then the greedy algorithm for
  TAP is an approximation algorithm which
  returns solution $A$ within factor
  $\frac{ 1 + \varepsilon }{ r^* } +\log \left( \frac{T}{OPT} \right)$  
  of optimal size.
\end{theorem}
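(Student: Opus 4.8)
The plan is to follow the proof of Theorem~\ref{thm:apx_sm} essentially verbatim, replacing only the cost-effectivity step with a marginal-gain bound derived from the reticence $r^*$. Concretely, I would run the greedy algorithm until $\sigma(A) \ge T$ (rather than halting once the marginal gain drops below $1$), and then argue that each greedy step still makes definite progress, so that the algorithm genuinely attains the threshold and is an approximation algorithm rather than a bicriteria one. Everything up to the choice of the critical index $i$ is identical; the new content lives entirely in how the final phase of seeding is counted.

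The key new ingredient, and the step I expect to be the main obstacle, is a lower bound on the best available marginal gain while the threshold is unmet: for any $B$ with $\sigma(B) < T$ there is a node $v \notin B$ with $\Delta_v \sigma(B) \ge r^*$. To establish this I would pass to the generalized-reachability representation guaranteed by Proposition~\ref{reach_trig}, writing $\sigma(B) = \sum_i p_i \sigma_i(B)$ with $\sigma_i$ the reachability count in $G_i$. Seeding a node $v \notin B$ activates $v$ in every realization, so $\Delta_v \sigma(B) \ge \Pr[v \text{ not activated by } B]$. Because reachability is monotone in the seed set, this inactivation probability is minimized when every other node is seeded, i.e. when all neighbors of $v$ are active, and by the definition of reticence that minimum equals $r_v \ge r^*$. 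Hence $\Delta_v \sigma(B) \ge r^* > 0$, and since $\sigma(B) < T \le n$ forces $B \ne V$, such a $v$ always exists so the greedy maximizer has gain at least $r^*$. The delicate point is arguing cleanly that the single-node inactivation probability, not merely the aggregate $\sigma$, is monotone in $B$ and attains its minimum value $r_v$ precisely at full seeding.

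With this bound in hand, the remainder mirrors Theorem~\ref{thm:apx_sm}. The chain of inequalities culminating in the recursion $T - \sigma(A_{i+1}) - \varepsilon \le (1 - 1/o)(T - \sigma(A_i))$ is unchanged, so selecting the index $i$ with $T - \sigma(A_i) \ge o(1+\varepsilon)$ and $T - \sigma(A_{i+1}) < o(1+\varepsilon)$ still yields $i \le o \log(T/o)$ exactly as before. The only change is in the final phase: after inequality~(\ref{ineq_size_bd}) the residual gap $T - \sigma(A_{i+1})$ is below $o(1+\varepsilon)$, and since every further greedy step shrinks the gap by at least $r^*$, at most $o(1+\varepsilon)/r^*$ additional nodes suffice to reach $\sigma(A) \ge T$. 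Summing the two phases gives total size $g \le o \log(T/o) + o(1+\varepsilon)/r^* = o\left( \frac{1+\varepsilon}{r^*} + \log(T/o) \right)$, the claimed factor. Finally, since each step before termination increases $\sigma$ by at least $r^* > 0$ and $\sigma(V) = n \ge T$, the algorithm is guaranteed to terminate with $\sigma(A) \ge T$, confirming that it is an approximation rather than a bicriteria algorithm.
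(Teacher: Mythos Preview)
Your proposal is correct and is exactly the analogy the paper has in mind: everything up through inequalities (\ref{ineq_diff})--(\ref{ineq_size_bd}) is unchanged, and the only new ingredient is replacing the CEA step by the bound $\max_v \Delta_v \sigma(B) \ge r^*$, so that closing the residual gap $< o(1+\varepsilon)$ costs at most $o(1+\varepsilon)/r^*$ further seeds. Your justification of that bound---seeding $v$ forces $v$ active, hence $\Delta_v \sigma(B) \ge \Pr[v \text{ inactive under } B] \ge \Pr[v \text{ inactive under } V\setminus\{v\}] = r_v$ by monotonicity of reachability---is the right argument, and your observation that the halting rule must be $\sigma(A)\ge T$ rather than ``marginal gain $<1$'' (since $r^*$ may be below $1$) is a necessary correction the paper leaves implicit.
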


\subsection{Scalable bicriteria algorithm for $\sigma_{int} \oplus \sigma_{ext}$} \label{sect:STAB}
In this subsection, we detail the scalable 
bicriteria Algorithm \ref{alg:bicriteria}
for TAP when
the propagation is given by 
an instance of the triggering model
in the presence of external influence; that is, 
when $\sigma = \sigma_{trig} \oplus \sigma_{ext}$. 
We describe our scalable algorithm STAB first and then 
discuss the necessary sampling and estimation techniques
in the subsequent sections.
\subsubsection{Description of algorithm} \label{sect:alg_description}
As input, the algorithm takes
a graph $G = (V,E)$ representing a social network,
an external influence model $\sigma_{ext}$, internal
influence model $\sigma_{trig}$, an instance of the
triggering model, and the desired threshold
of activated users $T$. In addition, the user specifies
the \emph{fractional error} $\alpha$, on which the running time
of the algorithm and the accuracy of the solution depend.
Using $\alpha$, in line 1 the algorithm first determines $\ell$, the
number of graph samples it requires according to 
Section \ref{sect:est_sigma}.

Next in the for loop on line 2, the algorithm 
constructs a collection of \emph{oracles} which will be used
to estimate the average reachability in the sampled 
graphs $H_i$, which is used to approximate the expected
influence. Each graph $H_i$ is needed only while updating the
oracle collection in iteration $i$; 
once this step is completed, $H_i$ may be safely discarded.
Since the samples are independent, this process is completely
parallelizable.

Once the set of oracles has been constructed, a greedy algorithm
is performed in attempt to satisfy the threshold
$T$ of expected activation with a minimum-sized seed set.
The estimation in line 11 may be done in one of two ways:
using estimator $C1$ or $C2$; 
both are described in detail in Section \ref{sect:est_tau}. The
estimator chosen has a strong effect on both the
running time and performance of the algorithm: given 
the same oracles, $C1$ can be computed in time
$O( k )$, and in practice is much faster to compute 
than $C2$. However, the quality of $C1$ degrades with
the size of seed set. On the other hand, $C2$ takes
time $O( k |B| \log |B| )$ time to compute, where $B$
is the seed set for which the average reachability
is estimated; our experimental results show that 
the quality of $C2$ is vastly superior to $C1$ for
larger seed set sizes $|B|$; however its running
time increases. 

This algorithm achieves the following guarantees on performance:
\begin{theorem} \label{thm:guarantee}
  Suppose we have an instance of TAP with 
  $\sigma = \sigma_{trig} \oplus \sigma_{ext}$
  and that $T$ has been chosen such that CEA
  holds.

  Then, if $\eta > 0$, by choosing $\delta = \eta / n^3$,
  the solution 
  returned by Alg. \ref{alg:bicriteria} satisfies the
  following two conditions, with probability at least
  $1 - \eta$:
  \begin{enumerate}
  \item
    $\sigma (A) \ge T - 2\alpha T$
  \item
    If $A^*$ is an optimal solution satisfying 
    $\sigma (A^*) \ge T $,
    $\frac{ |A| }{ |A^*| } \le 1 + 4 \alpha T + \log ( T ).$
  \end{enumerate}
  If Assumption CEA is violated, the algorithm can
  detect this violation by terminating when the marginal
  gain drops below 1. In this case, bound 1 above
  becomes $\sigma( A ) \ge T - (1 + \varepsilon) |A^*|$.

  Using estimator $C1$ for average reachability
  in line 11 yields
  running time $O \left( \log ( n / \eta ) (n + m) |A| / \alpha^2 \right)$.
  If estimator $C2$ is used, a factor of $|A| \log |A|$
  is multiplied by this bound.
\end{theorem}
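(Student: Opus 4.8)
The plan is to reduce Theorem~\ref{thm:guarantee} to the $\varepsilon$-submodular greedy analysis of Theorem~\ref{thm:apx_sm}, using the two estimation sections to supply the value of $\varepsilon$ and to discharge the probabilistic conditioning. Algorithm~\ref{alg:bicriteria} does not run greedy on the true objective $\Phi = \sigma_{trig}\oplus\sigma_{ext}$ (the $\sigma$ of the statement) but on an estimate $\hat{\sigma}$ built from the $\ell$ sampled graphs $H_i$ and the reachability oracles of Section~\ref{sect:est_sigma}. Since $\Phi$ is exactly monotone and submodular by the combination theorem proved above, the first step is to show that $\hat{\sigma}$ is $\varepsilon$-approximately submodular in the sense of Property~\ref{prop:approx_sm_mono}, with $\varepsilon$ proportional to the pointwise estimation error: if $|\hat{\sigma}(S) - \Phi(S)| \le \alpha T$ holds simultaneously over every set $S$ queried by the greedy loop, then each marginal $\Delta_x\hat{\sigma}(\cdot)$ deviates from $\Delta_x\Phi(\cdot)$ by at most $2\alpha T$, so $\Delta_x\hat{\sigma}(B) - \Delta_x\hat{\sigma}(A) \le \Delta_x\Phi(B) - \Delta_x\Phi(A) + 4\alpha T \le 4\alpha T$, i.e.\ $\hat{\sigma}$ satisfies Property~\ref{prop:approx_sm_mono} with $\varepsilon = 4\alpha T$. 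The factors $4\alpha T$ and $2\alpha T$ in the statement track through this substitution.

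Second, I would establish the uniform estimation-error event. Fixing the sample count $\ell = O(\alpha^{-2}\log(n/\eta))$ from Section~\ref{sect:est_sigma} and the oracle confidence $\delta = \eta/n^3$, a Chernoff/Hoeffding bound over the $\ell$ independent graph samples, combined with the accuracy guarantee of the chosen reachability estimator ($C1$ or $C2$) from Section~\ref{sect:est_tau}, gives $|\hat{\sigma}(S) - \Phi(S)| \le \alpha T$ for a single $S$ with failure probability at most $\delta$. The greedy loop runs for at most $|A| \le n$ iterations and examines at most $n$ candidate nodes per iteration, so a union bound over the at most $n^2$ marginal-gain queries, together with the per-query confidence $\delta = \eta/n^3$, drives the total failure probability below $\eta$; this is exactly where the choice $\delta = \eta/n^3$ is consumed. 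I expect this to be the main obstacle: one must verify that the graph-sampling error and the reachability-oracle error compose into a single clean additive bound of $\alpha T$ per set, and that the events hold simultaneously across all queried sets without inflating the sample count. That is precisely the content deferred to Sections~\ref{sect:est_sigma} and~\ref{sect:est_tau}, and the whole theorem rests on it.

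Third, I would condition on this good event and apply Theorem~\ref{thm:apx_sm} to $\hat{\sigma}$. For the size bound, Theorem~\ref{thm:apx_sm} yields $|A| \le OPT\,(1 + \varepsilon + \log(T/OPT))$, where $OPT$ is the optimal size for $\hat{\sigma}$ to reach $T$; since $\Phi(A^*)\ge T$ and the estimate is within $\alpha T$, one has $OPT \le |A^*|$, and with $OPT \ge 1$ and $\varepsilon = 4\alpha T$ this collapses to $|A|/|A^*| \le 1 + 4\alpha T + \log T$. For the activation bound, CEA on the true $\Phi$ together with the $\alpha T$ estimation error keeps the estimated marginals strictly positive whenever $\hat{\sigma}(A) < T - \alpha T$, so the greedy cannot terminate before $\hat{\sigma}(A) \ge T - \alpha T$; translating back through the estimation error gives $\Phi(A) \ge \hat{\sigma}(A) - \alpha T \ge T - 2\alpha T$. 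If CEA fails, the algorithm instead stops when a marginal gain drops below $1$, and the early-termination clause of Theorem~\ref{thm:apx_sm} directly supplies the looser bound $\Phi(A) \ge T - (1+\varepsilon)|A^*|$.

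Finally, the running time is a direct accounting. Constructing the oracle collection costs $O(\ell(n+m)) = O(\alpha^{-2}\log(n/\eta)(n+m))$, since there are $\ell$ samples each of size $O(n+m)$; the greedy phase performs $|A|$ iterations, and with estimator $C1$ each average-reachability query costs $O(k)$, so multiplying through gives $O(\alpha^{-2}\log(n/\eta)(n+m)|A|)$. Switching to $C2$ replaces the $O(k)$ per-query cost with $O(k|B|\log|B|)$, which over the greedy iterations multiplies the bound by the extra factor $|A|\log|A|$ claimed in the statement.
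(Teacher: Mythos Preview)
Your proposal is correct and follows essentially the same route as the paper: establish $\varepsilon$-approximate submodularity of $\hat\sigma$ with $\varepsilon = 4\alpha T$ from the pointwise $\alpha T$ error, union-bound over the polynomially many sets touched by greedy using $\delta = \eta/n^3$, then invoke Theorem~\ref{thm:apx_sm} and Section~\ref{sect:apx} for the size and activation bounds, finishing with the direct running-time accounting. One small slip to fix: since the algorithm's stopping criterion is $\hat\sigma(A) \ge T - \alpha T$ (not $T$), your ``$OPT$'' should be defined as the minimum size for $\hat\sigma$ to reach $T - \alpha T$, exactly as the paper does with $B^*$; otherwise $\hat\sigma(A^*) \ge T - \alpha T$ does not give $OPT \le |A^*|$.
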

\begin{proof}
  Let $A^*$ be a seed set of minimum size satisfying
  $\sigma (A^*) \ge T$. Then, as discussed in Section
  \ref{sect:est_sigma}, if $\delta = \eta / n^3$,
  then with probability at least $1 - \eta$, 
  $A^*$ satisfies
  $\hat{ \sigma } (A^*) \ge T - \alpha T$ by
  the choice of $\ell$ in Alg. \ref{alg:bicriteria},
  and the analysis in Section \ref{sect:est_sigma}.
  Hence $|A^*| \ge |B^*|$, where
  $B^*$ is a set of minimum size satisfying
  $\hat{\sigma} (B^*) \ge T - \alpha T$. 

  Notice $\hat{\sigma}$ is 
  $4 \alpha T$-approximately submodular on 
  the sets considered by the greedy algorithm with
  probability at least $1 - \eta$.
  By Section \ref{sect:apx}, in this case,
  the solution
  $A$ returned by Alg. \ref{alg:bicriteria}
  satisfies $|A| \le (1 + 4 \alpha T + \log T) |B^*| \le (1 + 4 \alpha T + \log T) |A^*|$.
  Furthermore, 
  $\sigma ( A ) \ge \hat{ \sigma }(A) - \alpha  \ge T - 2 \alpha T,$
  if CEA holds, otherwise the alternate bound follows
  from Theorem \ref{thm:apx_sm}.

  Next, we consider the running time of Alg. \ref{alg:bicriteria}.
  Let $m$ be the number of edges that have a nonzero
  probability to exist in one of the reachability
  instances ($m$ is at most the number of edges in
  input graph $G$).
  Lines 2 and 3 clearly take time $\ell (n + m)$.
  By Cohen et al. \cite{Cohen2014a}, line 4
  takes time $O( k \ell m )$.  The while
  loop on line 9 executes exactly $|A|$
  times, and the for loop on line 10 requires
  time $O(kn)$ if estimator $C1$ is used,
  and time $O(k|A|\log |A|n)$ if estimator $C2$
  by Section \ref{sect:est_tau}.
  By the choices of $k$ and $\ell$ on line 1,
  we have the total running times bounded as stated. \vspace{-0.3cm}
\end{proof}

\begin{algorithm}[h!]
 \KwData{$G, \sigma_{trig}, \sigma_{ext}, T, \alpha, \delta$}
 \KwResult{ Seed set $A$ }
 Choose $\ell \ge \log (2 / \delta) / \alpha^2$,
 $k \ge (2 + c) \log n (\alpha T)^{-2}$ 
 as discussed in Section \ref{sect:est_sigma}\;
 \For{$i = 1$ to $\ell$}{ 
   Sample graph $G_i$ from $\sigma_{trig}$\;
   Sample external seed set $A_i^{ext}$ for $\sigma_{ext}$\;
   Construct $H_i = G_i - \sigma_i (A_i^{ext})$,
   and store the value of $| \sigma_i (A_i^{ext}) |$,
   as described in Section \ref{sect:tau}\;
   Update the oracle collection $\{ X_u : u \in V \}$
   for $\hat{ \tau }$ according to $H_i$ 
   as in Section \ref{sect:tau}\;
 }

 $A \equiv \emptyset$, $O \equiv \sum_{i=1}^\ell 
 \left| \sigma_i (A_i^{ext} ) \right| / \ell$\;
 \While{$\hat{ \sigma } (A) < T - \alpha T $}{
   \For{$u \in V$}{
     Estimate 
     $\Delta_u = \{ \hat{\tau}(A \cup \{u\} ) - \hat{\tau} (A) \}$ 
     using one of the
     estimators $C1$ or $C2$ as described in Sections
     \ref{sect:est_tau}, \ref{sect:est_C2}\;
   }
   $A = A \cup \{ u^* \}$, where $u^*$ maximizes $\Delta_u$\;  
   Compute $\hat{\sigma}(A) = \hat{\tau}(A) + O$
   by Lemma \ref{lem:sig_tau}\;
}
 \caption{Scalable TAP Algorithm with Bicriteria guarantees (STAB)}
 \label{alg:bicriteria} 
\end{algorithm}

\subsubsection{Estimation of $\sigma_{trig} \oplus \sigma_{ext} (A)$}
\label{sect:est_sigma}
Let $\sigma_{trig}$ be a model of
internal influence propagation, which in
this section will be an instance of the triggering model.  Let
$\sigma_{ext}$ be the model of external influence activation,
and let $\sigma = \sigma_{trig} \oplus \sigma_{ext}$ be the
combination of the two as defined above.
We use the
following version of Hoeffding's inequality ($T$ will be the
threshold that is input to Alg. \ref{alg:bicriteria}).
\begin{theorem}[Hoeffding's inequality]
  Let $X_1, \ldots, X_\ell$ be independent random variables
  in $[0,T]$.  Let $\overline{X}$ be the empirical mean
  of these variables,
  $\overline{X} = \frac{1}{ \ell } \sum_{i = 1}^\ell X_i.$
  Then for any $t$
  $Pr \left( \left| \overline{X} - E[ \overline{X} ] \right| \ge t \right) \le 2\exp \left( - \frac{ 2 \ell t^2 }{T^2} \right) .$
\end{theorem}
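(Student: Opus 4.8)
The plan is to prove this specialization of Hoeffding's inequality by the standard exponential-moment (Chernoff) method, exploiting the fact that every variable lies in the fixed interval $[0,T]$ of common width $T$. First I would reduce the two-sided statement to a one-sided tail bound. The upper tail $\Pr(\overline{X} - E[\overline{X}] \ge t)$ and the lower tail $\Pr(E[\overline{X}] - \overline{X} \ge t)$ are symmetric: replacing each $X_i$ by $T - X_i$ keeps the variables in $[0,T]$ and turns the lower deviation of the originals into the upper deviation of the transformed variables. Hence it suffices to bound one tail by $\exp(-2\ell t^2 / T^2)$ and then take a union bound over the two tails to recover the factor of $2$; the case $t \le 0$ is trivial since the right-hand side is then at least $1$.

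For the one-sided bound I would center the variables, setting $Y_i = X_i - E[X_i]$, so that $E[Y_i] = 0$ and each $Y_i$ is supported in an interval of width $T$. Since $\overline{X} - E[\overline{X}] = \frac{1}{\ell}\sum_i Y_i$, the event $\{\overline{X} - E[\overline{X}] \ge t\}$ coincides with $\{\sum_i Y_i \ge \ell t\}$. For any $s > 0$, Markov's inequality applied to the increasing map $x \mapsto e^{sx}$ gives $\Pr(\sum_i Y_i \ge \ell t) \le e^{-s \ell t}\, E[\exp(s \sum_i Y_i)]$, and independence factorizes the expectation into $\prod_i E[e^{s Y_i}]$.

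The crux is then Hoeffding's lemma: for a mean-zero variable $Y$ supported on an interval of width $T$, $E[e^{sY}] \le \exp(s^2 T^2 / 8)$. I would establish this by studying the cumulant generating function $\psi(s) = \log E[e^{sY}]$, noting $\psi(0) = 0$ and $\psi'(0) = E[Y] = 0$, and observing that $\psi''(s)$ equals the variance of $Y$ under the exponentially tilted law $dP_s \propto e^{sy}\,dP$. Since that tilted law is still supported on an interval of width $T$, its variance is at most $(T/2)^2 = T^2/4$, and a second-order Taylor expansion of $\psi$ about $0$ yields $\psi(s) \le s^2 T^2 / 8$. This is the step I expect to be the main obstacle, as the identity expressing $\psi''$ as a tilted variance and the bound that any law on an interval of width $T$ has variance at most $T^2/4$ are the only genuinely nontrivial ingredients; everything else is bookkeeping.

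Combining these pieces gives $\Pr(\sum_i Y_i \ge \ell t) \le \exp(-s\ell t + \ell s^2 T^2 / 8)$ for every $s > 0$. Finally I would optimize the free parameter by minimizing the exponent $-s\ell t + \ell s^2 T^2/8$, whose minimizer is $s = 4t/T^2$ and whose minimum value is $-2\ell t^2 / T^2$. Substituting back bounds the one-sided tail by $\exp(-2\ell t^2/T^2)$, and the union bound over the two symmetric tails supplies the factor of $2$, completing the proof.
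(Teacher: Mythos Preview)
Your proof is correct and is the standard Chernoff--Hoeffding argument. However, the paper does not actually prove this statement: it is quoted as a known tool (the classical Hoeffding inequality) and invoked without proof in Section~\ref{sect:est_sigma} to bound the number of samples $\ell$. So there is nothing in the paper to compare your argument against; you have supplied a full self-contained proof where the paper simply cites the result.
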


Since $\sigma_{trig}$ is an instance of the triggering model,
by Theorem \ref{reach_trig}, 
there exists a set $\{ (G_j, p_j) : j \in J \}$
such that for any seed set $A$, 
$\sigma_{trig} ( A ) = \sum_{j \in J} p_j \sigma_{j} (A),$
where $\sigma_j$ is simply the size of the set of nodes
reachable from $A$ in $G_j$. 
Thus, if $A$ is fixed, then 
by taking independent samples of graphs
from the probability distribution on 
$\{ G_j \}$, we get independent samples of 
$\sigma_i (A)$.

In the general model of external influence presented above,
every seed set $B \subset V$ has a probability $p_{B}$ of
being activated by the external influence. By
sampling from this distribution on subsets of nodes, and
independently sampling as above from $\sigma_{trig}$,
 $\sigma (A)$ could be computed exactly in the following way:
$\sigma(A) = \sum_{B \subset V} \sum_{j \in J} p_j p_B \sigma_j (A \cup B )$.
In most cases, this sum cannot be computed in polynomial time,
and certainly it has $\Omega (2^n)$ summands. 
Accordingly, we estimate its value by independently
sampling $\ell$ externally activated 
sets $\{ A_{ext}^j \}$ 
from $\sigma_{ext}$; we also independently
sample $\ell$ reachability graphs $G_j$
according to $\sigma_{trig}$ and estimate by averaging
the size of the reachability from a seed set $A$ in this 
context:
$ \sigma (A) \approx \hat{\sigma} (A) \equiv \frac{1}{\ell} \sum_{j = 1}^\ell \sigma_{j} \left( A \cup A_j^{ext} \right) .$
To estimate $\sigma (A)$ in this way within error 
$\alpha T$ with probability 
at least $1 - \delta$ from Hoeffding's 
inequality we require
$\label{lbound} \ell \ge \frac{ \log (2 / \delta ) }{2 \alpha^2 }$
such samples. 

Now, in our analysis of the greedy algorithm 
in Theorem \ref{thm:apx_sm} only at most $n^3$
sets were considered. All that is required for
the analysis to be correct as it that those
sets were estimated within the error
$\alpha T$; if $\delta < \eta / n^3$,
where $\eta < 1$, then by the union bound,
with probability at least $1 - \eta$,
the analysis for Theorem \ref{thm:apx_sm} holds.
In practice, we were able to get good results
with much higher values of $\delta$, see
Section \ref{sect:exp}.


\subsubsection{Estimation of $\hat{\sigma}$}

In the previous section, we describe an approach to
estimate $\sigma (A)$, based upon independent samples 
$\{ G_i \}$ of graphs from the triggering model,
and independent samples of externally activated nodes
$\{ A_i^{ext} \}$. Next, we need to compute the
value of the estimator $\hat{ \sigma }(A)$. One method
would be to compute it directly using breadth-first
search from the sets $A \cup A_i^{ext}$ in each
graph $G_i$. This method would unfortunately add a
factor of $\Omega (n + m)$ to the running time, which would
result in a running time of Alg. \ref{alg:bicriteria}
of $\Omega (n^2)$, too large for our purposes. Thus, we
would like to take advantage of estimators 
formulated by Cohen et al. \cite{Cohen2014a} for
the average reachability of a seed set 
over a set of graphs. However, because the external
seed sets $A^{ext}_i$ for each graph $G_i$ vary
with $i$, we must first convert the
problem into an average reachability format.

\paragraph{Conversion to an average reachability problem} \label{sect:tau}
Suppose we have sampled as above
$\ell$ pairs of sample graphs and
external seeds: $\{ (G_1, A_1^{ext}), \ldots, (G_\ell , A_{\ell}^{ext} ) \}$
To compute  
$\hat{\sigma}(A) = \frac{1}{\ell} \sum_{i = 1}^\ell \sigma_i ( A \cup A_{i}^{ext})$
efficiently, we first convert this sum to a
generalized reachability problem:
we construct graph $H_i$ from
$G_i$ by removing all nodes (and incident edges)
reachable from $A_i^{ext}$: 
$H_i = G_i - \sigma_i ( A_i^{ext} )$. 
The average reachability of a set $A$
in the graphs $\{H_i \}$, which we 
term $\tau (A)$, is
\begin{equation}
  \tau (A) = \frac{1}{\ell} \sum_{i=1}^\ell \tau_i (A), \label{eq:tau}
\end{equation}
where $\tau_i (A)$ is the size of the set reachable
from $A$ in $H_i$. The estimators formulated by Cohen et al.
are suitable to estimate the value of $\tau$,
and the following two lemmas
show how we can compute $\hat{ \sigma }(A)$ from
$\tau (A)$.
\begin{lemma} \label{lem:reach}
  The size of the reachable set from $A$ in $H_i$
  can be computed from reachability $\sigma_i$ in $G_i$
  as follows:
 $\tau_i (A) = \sigma_i (A \cup A_i^{ext}) - \sigma_i (A_i^{ext}).$
\end{lemma}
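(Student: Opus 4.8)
The plan is to translate the statement into set-reachability language and exploit one structural fact about the removed set. For a seed set $S$, write $R(S)$ for the set of nodes reachable from $S$ in $G_i$, so that $\sigma_i(S) = |R(S)|$, and recall that $H_i$ is the induced subgraph of $G_i$ on $V \setminus R(A_i^{ext})$. Setting $X = R(A_i^{ext})$, I would reduce the goal to showing that the set of nodes reachable from $A$ inside $H_i$ is exactly $R(A) \setminus X$. Granting this, $\tau_i(A) = |R(A)\setminus X|$, and since reachability distributes over unions, $|R(A)\setminus X| = |R(A)\cup X| - |X| = |R(A\cup A_i^{ext})| - |R(A_i^{ext})| = \sigma_i(A\cup A_i^{ext}) - \sigma_i(A_i^{ext})$, which is the claim.

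The key structural observation I would establish first is that $X$ is forward-closed in $G_i$: if $u \in X$ and $(u,v)$ is an edge of $G_i$, then $v$ is reachable from $A_i^{ext}$ through $u$, so $v \in X$. Equivalently, every edge of $G_i$ leaving a node of $X$ lands back in $X$; no edge crosses from $X$ into $V\setminus X$.

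Using forward-closedness, I would prove the two inclusions. For the easy direction, any node reachable from $A$ in $H_i$ is reachable from $A$ in the supergraph $G_i$ and lies outside $X$ by construction, hence belongs to $R(A)\setminus X$. For the reverse, I would take $w \in R(A)\setminus X$ together with any $G_i$-path from some $a \in A$ to $w$; if this path ever visited a node of $X$, forward-closedness would force all subsequent nodes — in particular $w$ — into $X$, contradicting $w \notin X$. Thus the path avoids $X$ entirely (its starting node $a$ included), so it survives in $H_i$ and witnesses that $w$ is reachable from $A$ in $H_i$.

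The main subtlety is precisely this reverse inclusion: a priori, deleting $X$ could sever a path that a node outside $X$ needs, and one must rule out that any useful path from $A$ is forced through the deleted region. Forward-closedness of $X$ is exactly what makes this impossible, and it is the single fact the argument genuinely depends on; the final counting step is then an immediate inclusion–exclusion.
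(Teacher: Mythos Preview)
Your proposal is correct and follows essentially the same approach as the paper: both identify the reachable set in $H_i$ with $R(A)\setminus R(A_i^{ext})$ by arguing that any $G_i$-path to a node outside $X=R(A_i^{ext})$ must avoid $X$ entirely, and then obtain the count via $|R(A)\setminus X| = |R(A\cup A_i^{ext})| - |R(A_i^{ext})|$. You make the forward-closedness of $X$ and the final inclusion--exclusion step explicit, whereas the paper's three-line proof leaves both implicit in its ``iff'' chain, but the underlying argument is the same.
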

\begin{proof}
  Suppose, in $G_i$, 
  $x$ is reachable from $A$, but not from 
  $A_i^{ext}$. This is true iff 
  there exists a path from $A$
  to $x$ in $G_i$ avoiding $\sigma_i (A_i^{ext})$,
  which is equivalent to the path existing in
  in $H_i$.
\end{proof}
The next lemma shows explicitly 
how to get $\hat{ \sigma }(A)$
from $\tau (A)$:
\begin{lemma} \label{lem:sig_tau}
  $\tau (A) = \hat{ \sigma }(A) - 
  \frac{1}{\ell} \sum_{i=1}^\ell \sigma_i 
  ( A_i^{ext} ).$
\end{lemma}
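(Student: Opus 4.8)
The plan is to derive the identity directly from the definition of the average reachability $\tau$ together with Lemma \ref{lem:reach}, since all the graph-theoretic substance has already been absorbed into that earlier lemma; what remains is only the bookkeeping of averages. First I would start from the defining equation (\ref{eq:tau}), namely $\tau(A) = \frac{1}{\ell} \sum_{i=1}^\ell \tau_i(A)$, and substitute the per-graph identity $\tau_i(A) = \sigma_i(A \cup A_i^{ext}) - \sigma_i(A_i^{ext})$ supplied by Lemma \ref{lem:reach} into each summand.

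Next I would split the resulting sum by linearity into two averages,
\begin{equation*}
  \tau(A) = \frac{1}{\ell} \sum_{i=1}^\ell \sigma_i(A \cup A_i^{ext}) - \frac{1}{\ell} \sum_{i=1}^\ell \sigma_i(A_i^{ext}).
\end{equation*}
The first average is exactly the estimator $\hat{\sigma}(A)$ as defined in Section \ref{sect:est_sigma}, namely $\hat{\sigma}(A) = \frac{1}{\ell} \sum_{j=1}^\ell \sigma_j(A \cup A_j^{ext})$, so recognizing it yields $\tau(A) = \hat{\sigma}(A) - \frac{1}{\ell} \sum_{i=1}^\ell \sigma_i(A_i^{ext})$, which is the claim.

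There is no genuine obstacle here: the only nontrivial ingredient is Lemma \ref{lem:reach}, which establishes the term-by-term relationship between reachability in $H_i$ and reachability in $G_i$, and that lemma is assumed already proven. The present statement is a purely algebraic consequence obtained by averaging that relationship over the $\ell$ samples and matching the definition of $\hat{\sigma}$. The one point worth stating explicitly in the writeup is that the correction term $\frac{1}{\ell} \sum_{i=1}^\ell \sigma_i(A_i^{ext})$ is precisely the quantity $O$ precomputed and stored in line 7 of Alg. \ref{alg:bicriteria}, so that the lemma justifies the computation $\hat{\sigma}(A) = \hat{\tau}(A) + O$ used in line 11; this is the role the lemma plays in the algorithm and is worth flagging so the reader sees why the conversion to an average-reachability problem loses no information.
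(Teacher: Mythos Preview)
Your proposal is correct and follows exactly the approach the paper takes: the paper's own proof is the single sentence ``This statement follows directly from Lemma~\ref{lem:reach} and the definitions of $\hat{\tau}$, $\hat{\sigma}$,'' and you have simply written out those implicit steps explicitly. Your additional remark tying the correction term to the quantity $O$ in the algorithm is also consistent with the paper's surrounding discussion (though the paper refers to lines 5 and 14 rather than 7 and 11, so double-check the line numbering against the compiled algorithm).
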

\begin{proof}
  This statement follows directly from Lemma
  \ref{lem:reach} and the definitions of
  $\hat{\tau}$, $\hat{\sigma}$.
\end{proof}
In Alg. \ref{alg:bicriteria}, for each $i$, $\sigma_i (A_i^{ext} )$
is computed in the construction of $H_i$; its size can be stored
as instructed on line 5, and used in the computation of
$O$ for line 14 in the greedy 
algorithm's stopping criterion.

\paragraph{Estimation of $\tau (A)$, method $C1$} \label{sect:est_tau}
In this subsection,
we utilize methods developed by Cohen et al.
\cite{Cohen2014a} to estimate efficiently
the average reachability problem $\tau (A)$
defined in (\ref{eq:tau}). For convenience,
we refer to this method of estimation in
the rest of the paper as method $C1$. 

Each pair
$(v, i) \in V \times \{1, \ldots, \ell \}$,
consisting of a node $v$
in graph $H_i$, is assigned
an independent, random rank value
$r_v^{(i)}$ uniformly chosen on the interval $[0,1]$.
The \emph{reachability sketch} for a set $A$ is
defined as follows: let $k$ be an integer,
and consider
$X_A = \text{bottom-$k$} \left( \{ r_v^{(i)} | (v, i) \in \tau_i (A) \} \right),$
where \emph{bottom-$k$}(S) means to take the 
$k$ smallest values of the set $S$, and
$R^{(i)}_A$ is the set of nodes reachable 
from $A$ in graph $H_i$.
The \emph{threshold rank} of a set $A \subset V$ 
is then defined to be $\gamma_A = \max X_A,$
if $|X_A| = k$, and $\gamma_A = 1$ if $|X_A| < k$.
The estimator for $\tau (A)$ is
then
$\hat{ \tau }(A) = (k - 1) / (\ell \gamma_A) $

If $k = (2 + c) (\epsilon)^{-2} \log n$,
 the probability that this estimator
has error greater than $\epsilon$ is at most $1 / n^c$
\cite{Cohen2014a}. The bounds on $k$ 
needed for the proof of Theorem
\ref{thm:guarantee} are determined by 
taking $\epsilon = \alpha T$.

\paragraph{Computation of $X_A$}
First, we compute $X_u$ for all $u \in V$ using
Algorithm 2 of
\cite{Cohen2014a} in time $O( k \ell m )$,
where $m$ is the maximum number of edges
in any $H_i$. This collection
$\{ X_u : u \in V \}$ is referred to as
\emph{oracles}.

Next, we discuss how to compute, for an
arbitrary node $u$, $X_{A \cup \{ u \} }$
given that $X_A$ has already been computed.
This computation will take time $O(k)$ and
is necessary for the
bicriteria algorithm: given that $X_u$ and 
$X_A$ are both sorted,
we compute $X_{A \cup \{ u \}}$ by merging
these two sets together until
the size of the new set reaches $k$ values.

\paragraph{Estimation of $\tau (A)$, method $C2$} \label{sect:est_C2}
Alternatively to estimator $C1$, we can estimate
$\tau (A)$ from the oracles $\{ X_u : u \in V \}$
in the following way. Let $\gamma_u$ be the
threshold rank as defined above for $X_u$. 
Then
$\hat{ \tau } (A) = \frac{1}{\ell}
\sum_{z \in \bigcup_{v \in A} X_v \backslash \{ \gamma_v \}} \frac{1}{ \max_{u \in A | z \in X_u \backslash \{ \gamma_u \} } \gamma_u }.$
For convenience, we refer to the estimator in the rest of
the paper as estimator $C2$; it was originally introduced in
\cite{Cohen2009}.

\section{Experimental evaluation} \label{sect:exp}
In this section, we demonstrate the scalability of
STAB as compared with the current state-of-the-art
IMM algorithm \cite{Tang2015} and with the
greedy algorithm for TAP in Goyal et al. \cite{Goyal2013}.
The methodology is described
in Section \ref{sect:exp:methods}, comparison to existing
IM algorithms is in Section \ref{sect:exp:comp}, and investigation
of the effect of external influence on the performance
of STAB is in Section \ref{sect:exp_ext}.
All experiments
were run on an 
Intel(R) Core(TM) i7-3770K CPU @ 4.0GHz
CPU with 32 GB RAM.
\subsection{Datasets and framework} \label{sect:exp:methods}
We evaluated the following algorithms
in our experiments.

\subsubsection{CELF} the greedy algorithm by naive sampling
of Kempe et al. \cite{Kempe2003}
can be modified to find a solution to TAP, as
shown in Goyal et al. \cite{Goyal2013}. The 
modified algorithm
performs Monte Carlo sampling at each step to select 
the node with the highest marginal gain into the seed set
until the threshold $T$ is satisfied.
The Cost-Effective Lazy Forward (CELF) approach
by Leskovec et al. \cite{Leskovec2007} improves
the running time of this algorithm by reducing the number
of evaluations required. 

\subsubsection{IMM}
The IMM algorithm \cite{Tang2015} is the current
state-of-the-art algorithm to solve the IM problem,
where the number of
    seeds $k$ is input. Since TAP asks to minimize the number
    of seeds, this algorithm cannot be applied directly.
    For the purpose of comparison to our methods,
    we adapt the algorithm by performing a binary search on
    $k$ in the interval $[1, T]$, 
    where $T$ is the threshold given in
    TAP. At each stage of the search, IMM utilizes the 
    value of $k$ in question until the minimum $k$ as estimated
    by IMM is found. Since binary search can identify the 
    minimum in at most $\log T$ iterations, we chose this 
    approach over starting at $k = 1$ and incrementing by one 
    until the minimum is found, which in the worst case would
    require $T$ iterations.

\subsubsection{STAB} the STAB algorithm (Alg. \ref{alg:bicriteria})
    using estimators $C1$ and $C2$, referred to as
    STAB-C1, STAB-C2 respectively. Since these 
    are greedy algorithms with an approximately 
    submodular function, we also use the CELF
    approach to reduce the number of evaluations
    performed by STAB-C2; for STAB-C1, we found
    this optimization unnecessary.

\emph{Network topologies:}
We generated networks
according to the
Erdos-Renyi (ER) and Barabasi-Albert (BA) models.
For ER random graphs, we used varying number
of nodes $n$; the independent probability that an edge
exists is set as $p = 2 / n$.  The BA model 
was used to generate scale-free synthesized
graphs; the exponent in the power law
degree distribution was set at
$-3$ for all BA graphs.

The
following topologies of real networks collected
by the Stanford Network Analysis Project \cite{snapnets} were utilized:
1) Facebook, a section of the Facebook network,
with 
$n = 4039$, $m =88234$;
2) Nethept, high energy physics collaboration network,
$n = 15,229$, $m = 62,752$;
3) Slashdot, social network with $n = 77,360$, $m = 828,161$;
4) Youtube, from the Youtube online social network,
$n = 1,134,890$, $m =2,987,624$; and
5) Wikitalk, the Wikipedia talk (communication)
network, $n = 2,394,385$, $m = 5,021,419$.


\begin{table}[h]
\centering

\begin{tabular}{| l | c | c | c | c |}
  \hline
  &  $\alpha = 0.1$ & $\alpha = 0.2$ & $\alpha = 0.3$ & $ip_{max}$\\
  \hline
  ER 1000 & 1.0 & 0.4 & 0.2 & 1\\
  \hline
  BA 15000 & 6.4 & 1.6 & 0.9 & 0.1 \\
  \hline
  Nethept  & 954 & 230 & 102 & 1\\
  \hline
  Slashdot & 25 & 6.4 & 3.3 & 0.01 \\
  \hline
  Youtube  & 385 & 91 & 41 & 0.01 \\
  \hline
  Wikitalk & 1704 & 274 & 122 & 0.01 \\
  \hline
\end{tabular}
\vspace{+0.2cm}
\caption{Oracle computation time (sec)} \label{table:oracles}
\shrinkfig
\end{table}
\emph{Models of internal and external influence:}
In all experiments, we used the independent
cascade model for internal influence propagation:
each edge $e$ in the graph is assigned 
a uniform probability
$ip_e \in [0, ip_{max}]$. For synthesized networks, we
usually set $ip_{max} = 1$; for real networks, we
observed that if $ip_{max} = 1$, then in most cases
a single node can activate a large fraction of the
network (often more than 33\%). However,
a large number of empirical studies have confirmed
that most activation events occur within a few hops
of a seed node 
\cite{Goel2015,Gonzalez-Bailon2013,Myers2014}; 
these works indicate that $ip_{max} = 1$ is not a realistic
parameter value for the IC model.  Therefore, we also ran experiments 
using lower values for
$ip_{max}$.


For external influence, we adopted in all experiments
the model that each node $u$ is activated externally
independently with uniformly chosen probability
$ep_u \in [0, ep_{max}]$. The setting of $ep_{max}$
is discussed in the context of each experiment.
Unless otherwise stated, we set $\delta = 0.01$,
which we found sufficient to return a solution
within the guarantee provided in Theorem \ref{thm:guarantee}
in most cases with estimator $C1$, and in all cases
with estimator $C2$.

\begin{figure*}
  \subfigure[ER, activation] {
    \includegraphics[width=0.22\textwidth]{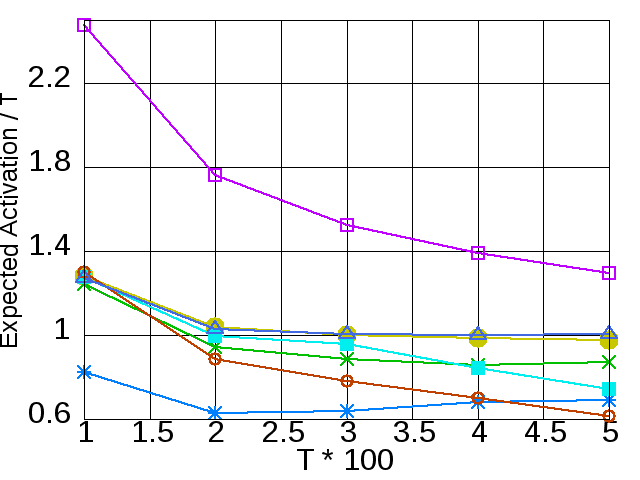}
    \label{fig:er_act}
  }
  \subfigure[Nethept, activation] {
    \includegraphics[width=0.22\textwidth]{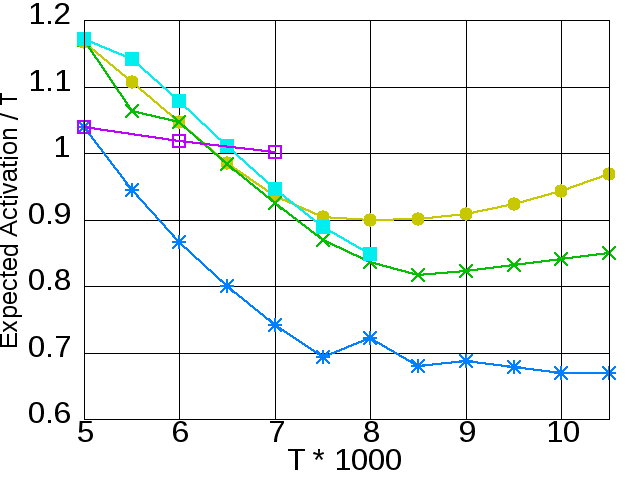}
    \label{fig:nh_act}
  }
  \subfigure[Youtube, activation] {
    \includegraphics[width=0.22\textwidth]{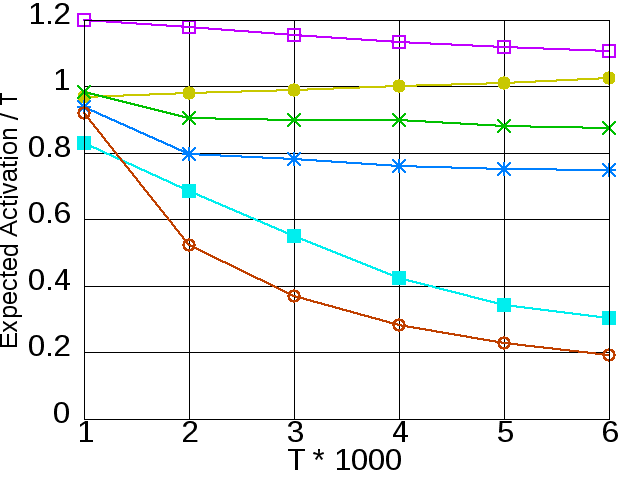}
    \label{fig:yt_act}
  }
  \subfigure[Wikitalk, activation] {
    \includegraphics[width=0.22\textwidth]{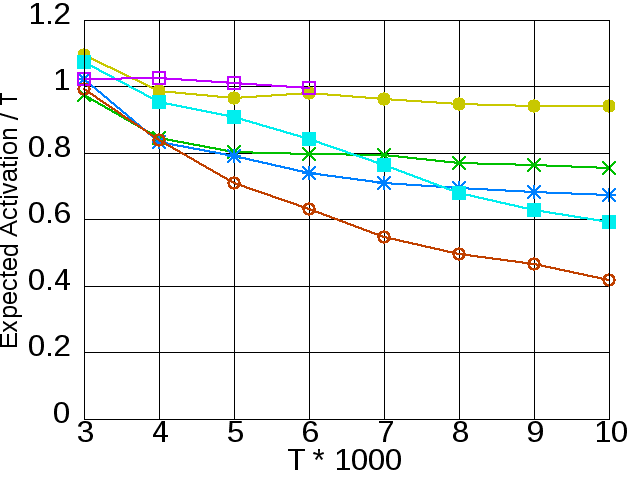}
    \label{fig:wiki_act}
  }
  
  \subfigure[ER, running time] {
    \includegraphics[width=0.22\textwidth]{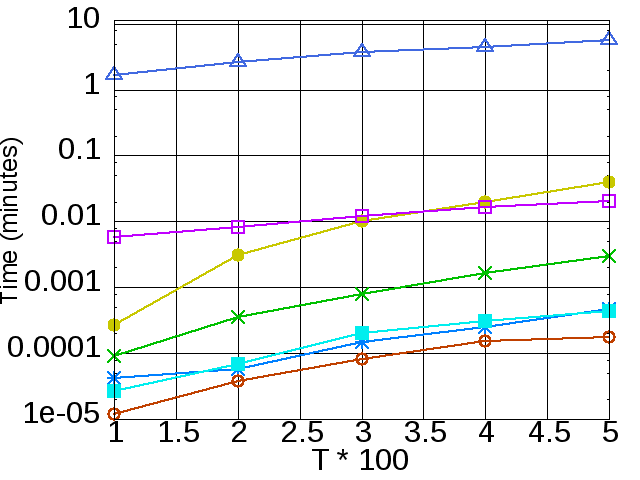}
    \label{fig:er_rt}
  }
  \subfigure[Nethept, running time] {
    \includegraphics[width=0.22\textwidth]{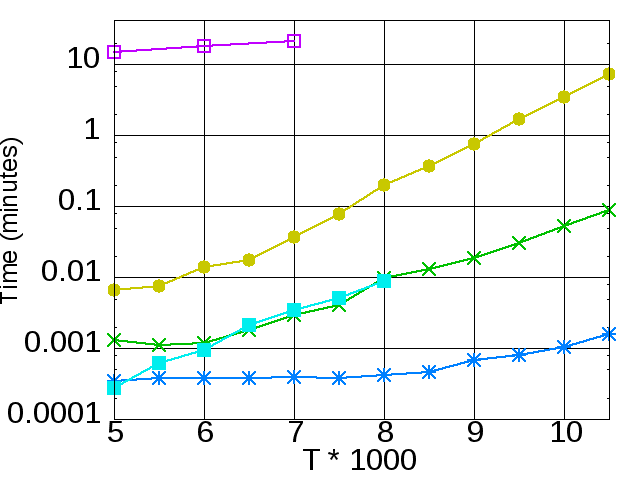}
    \label{fig:nh_rt}
  }
  \subfigure[Youtube, running time] {
    \includegraphics[width=0.22\textwidth]{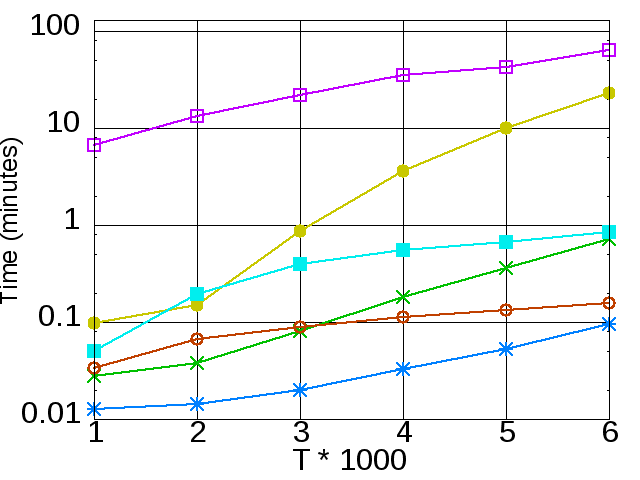}
    \label{fig:yt_rt}
  }
  \subfigure[Wikitalk, running time] {
    \includegraphics[width=0.22\textwidth]{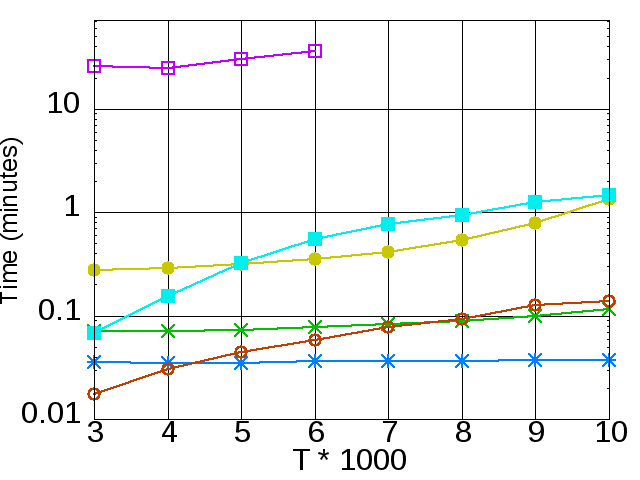}
    \label{fig:wiki_rt}
  }

  \subfigure[ER, number of seeds] {
    \includegraphics[width=0.22\textwidth]{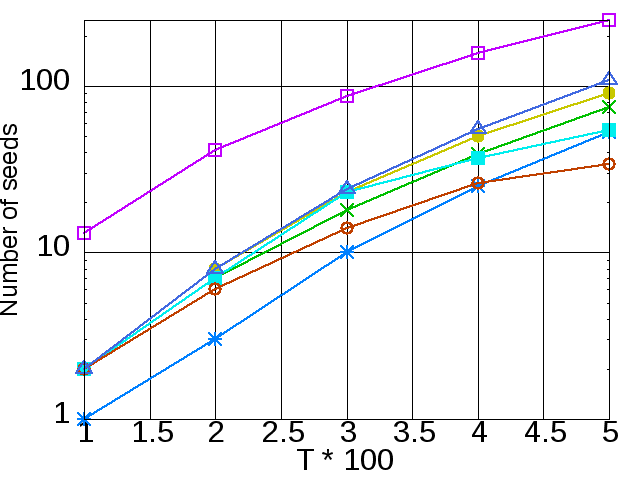}
    \label{fig:er_ss}
  }
  \subfigure[Nethept, number of seeds] {
    \includegraphics[width=0.22\textwidth]{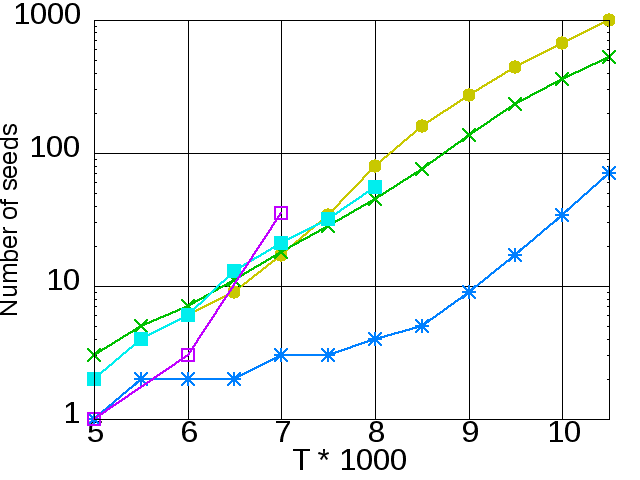}
    \label{fig:nh_ss}
  }
  \subfigure[Youtube, number of seeds] {
    \includegraphics[width=0.22\textwidth]{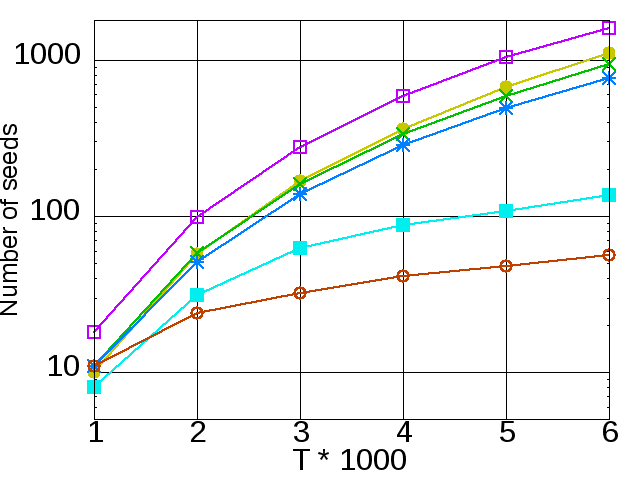}
    \label{fig:yt_ss}
  }
  \subfigure[ Legend ] {
    \includegraphics[width=0.22\textwidth]{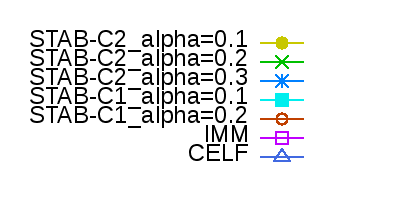}

  }
  \caption{ First row: expected activation normalized by threshold $T$ versus $T$. Second row: running time in minutes versus $T$. Third row: size of seed set versus $T$.} \label{fig:comparison}   \shrinkfig
\end{figure*}

\subsection{Performance comparison and demonstration of scalability} \label{sect:exp:comp}
In this section, we compare the performance of 
STAB-C1 and STAB-C2 to  CELF and IMM
as described above;
we experimented 
on the datasets listed in Table \ref{table:oracles},
where the total CPU time required 
to compute the oracles is shown. The
oracle computation is parallelizable, and in our
experiments we used 7 threads of computation.
The oracles for each value of $\alpha$ were computed
once and stored, and thereafter when running STAB the
oracles were
simply read from a file. The running
times we report for the various versions of STAB do
not include the oracle computation time unless otherwise
specified. Also in Table \ref{table:oracles} are the
values of $ip_{max}$ used for each dataset in this
set of experiments. Since IMM and CELF do not consider
external influence, the experiments in this section
had no external influence; that is, $ep_{max} = 0$
for all datasets. To evaluate the seed set returned
by the algorithms we used the average activation
from 10000 independent Monte Carlo samples.

We show typical results in Fig. \ref{fig:comparison} 
on the following
four datasets: ER 1000 ($n = 1000$),
Nethept, Youtube, and Wikitalk.
The first row of Fig. \ref{fig:comparison} 
shows the expected activation,
normalized by the threshold value $T$,
of the seed set returned by each algorithm plotted against
$T$. Thus, a value of $1$ indicates the algorithm
successfully achieved the threshold of activation. 
The second row of the figure shows the running time 
in minutes of
each algorithm against $T$, and the third row shows the
size of the seed set returned by each algorithm.

For the ER 1000 network results are shown in the first
column of Fig. \ref{fig:comparison}; this dataset was
the only one on which CELF finished under the time limit
of 60 minutes.
We see that IMM is consistently returning a larger seed set
than STAB and CELF and overshooting the threshold value
in activation by as much as a factor of 2.5. Thus, it
has poor performance of minimizing the size $k$
of the seed set for TAP. This behavior 
appears to be a result of IMM
underestimating the influence of its seed set internally.
As expected, CELF performs very well in terms of the size
of the seed set and meeting the threshold $T$, but is
running on a timescale larger than the other algorithms
by factor of at least 100 and as large as
$10^4$. Notice that STAB-C2 with $\alpha = 0.1$ 
has virtually identical activation and size of
seed set to CELF, while running at a much faster
speed; as expected, the quality of solution of STAB
deteriorates as $\alpha$ is raised, but the running time
decreases drastically. In addition, notice that none of
the versions of STAB seed too many nodes and overshoot
the threshold as IMM does; instead, STAB errs on the side
of seeding too few nodes and only partially
achieving the threshold $T$ of activation.
Finally, it is evident that STAB-C1 runs faster than
STAB-C2 and has a similar amount of activation when
the seed set required is relatively small. However,
the larger the seed set required, the farther
is STAB-C1 from achieving the threshold $T$ while
STAB-C2 does not suffer from this drawback.

The results from Nethept are shown
in the second column of Fig. \ref{fig:comparison}; 
with $n = 15,229$, CELF was unable to run
within the time limit, and 
IMM was able to complete its binary search only
for $T \le 7000$; for the higher threshold values,
IMM exceeded the 32 GB memory usage limit imposed 
in our setup. On this network, STAB-C2 again 
exhibits the best performance. Despite
an initial decrease in activation relative to $T$
shown in Fig. \ref{fig:nh_act}, the trend reverses
around $T = 8000$ and seed set size $100$ and the
algorithm gets closer to achieving the threshold.
This behavior is explained by lower coefficient
of variation of
estimator $C2$ as analyzed in \cite{Cohen2009};
the CV can be lower than estimator $C1$ by up to a factor
of $\sqrt{ |A| }$, where $A$ is the seed set.
In stark contrast, STAB-C1 was unable to proceed past $T = 8000$
even for $\alpha = 0.1$ because the estimator $C1$ 
appears to lose accuracy as the number of seeds increases.
By this mechanism, STAB-C1 had achieved its maximum
estimation of influence of any set and thereby could not
increase it before reaching an estimated activation
of $T$, for $T > 8000$.

Next, the Youtube network is
shown in the third column
of Fig. \ref{fig:comparison}.
As in the ER network, IMM
is underestimating the influence of its seed set
and thereby picking too many seeds and
overshooting the threshold, as shown in
Fig. \ref{fig:yt_act}; IMM picks nearly
twice as many seed nodes as STAB-C1, $\alpha = 0.1$,
as shown in Fig. \ref{fig:yt_ss}.
In Fig. \ref{fig:yt_rt},
the scalability of STAB is demonstrated as the most
precise version, STAB-C2 with $\alpha = 0.1$, 
runs faster than IMM by as much as a factor of 50.
On our largest dataset, the Wikitalk network, IMM
again exceeded 32 GB memory after $T = 6000$;
notice that the running time of STAB in all cases
is under 2 minutes and STAB-C2 with $\alpha = 0.3$
maintained activation greater than $0.6 T$ while
running in less than 5 seconds, as shown in 
Fig. \ref{fig:wiki_rt}. With inclusion of the parallelizable
and reusable oracle computation time of 122 seconds from
Table \ref{table:oracles}, the total time taken
by STAB-C2 is less than 3 minutes. 
The total running time for $\alpha = 0.1$ of STAB-C2
including the oracles
at $T = 10000$ is less than 30 minutes. 

\emph{Choice of $\alpha$:} The above discussion
demonstrates that $\alpha = 0.1$ provides the
close activation to the threshold $T$ while
maintaining high scalability. If faster running time
is desired, $\alpha$ may be increased, which results
in a loss of accuracy shown clearly in Fig. \ref{fig:wiki_act},
for $\alpha \in \{ 0.1, 0.2, 0.3 \} $. On the other hand, 
if activation closer to $T$ is required, smaller values of 
$\alpha$ may be used at higher computational cost.

\shrinkfig
\subsection{External influence} \label{sect:exp_ext}
In this section, we analyze the performance of STAB
when external influence is present in the network;
that is, when $ep_{max} > 0$. For this section, we
considered a BA network with 100,000 nodes, with
a threshold of $T = 1000$, and the Facebook network
with $T = 1500$.
Results on other topologies were qualitatively similar.
For all experiments in this section, 
we set $\alpha = 0.1$.

In \ref{fig:BA_ep}, we plot the expected activation
(Act) of the seed set returned by STAB-C2 normalized
by the threshold $T$, as $ep_{max}$ varies from
0 to 0.006.  As in the previous section,
the expected activation of the returned 
seed set $A$ is estimated
by independent Monte Carlo sampling. We also plot the
expected fraction of $T$ activated by the external
influence, along with the size of the seed set returned
by STAB-C2, normalized by its maximum value. 
As the effect of external influence in the network
increases, the algorithm requires fewer
seed nodes to ensure the expected activation is within
the specified error tolerance to threshold $T$.

In \ref{fig:FB_ep}, we show an analogous plot for the
Facebook network; interestingly, the size of the seed set chosen
by STAB-C2 nearly doubles as $ep_{max}$ increase from
0 to 0.01, before beginning to decrease to 0. This
increase differs from what we expected; it is counterintuitive
that the algorithm would require more seeds to reach
the threshold as external influence increases. One 
possible explanation for this effect
is that the external influence both decreases and
distributes the marginal gain more evenly among 
seed nodes, so that the
greedy algorithm has a more difficult time identifying
the best seed nodes, especially in the presence of
the error of estimation. \shrinkfig


\begin{figure}
  \subfigure[BA 100,000] {
    \includegraphics[width=0.22\textwidth]{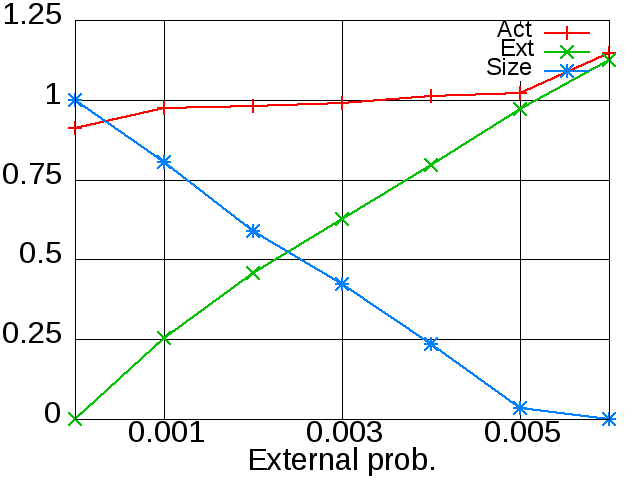}
    \label{fig:BA_ep}
  }
  \subfigure[Facebook] {
    \includegraphics[width=0.22\textwidth]{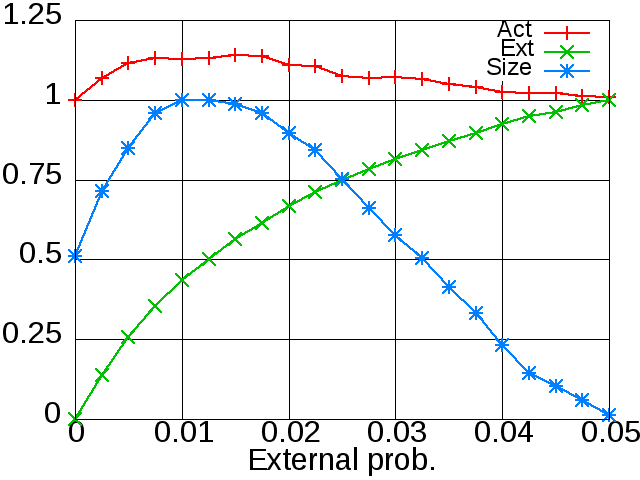}
    \label{fig:FB_ep}
  }

 
  \caption{The effect of external influence on activation and size of 
    seed set of STAB-C2.} \shrinkfig
\end{figure}

\section{Related work} \label{sect:rw}
Kempe et al. introduced the triggering model 
in a seminal work on the IM problem \cite{Kempe2003}, where
they exhibited a Monte Carlo greedy sampling algorithm
that achieves $1 - 1/e$ performance ratio for IM; this 
algorithm, although it runs in polynomial time, is very
inefficient and cannot scale well. Since the maximum coverage
problem prohibits performance guarantee better than
$1 - 1/e + o(1)$ under standard assumptions, 
the ratio for IM likely cannot
be improved, but much work has improved the scalability
of the algorithm while retaining the guarantee.
Leskovec et al. \cite{Leskovec2007} introduced the CELF method
to exploit submodularity and improve the running time.
Reverse Influence Sampling (RIS) was introduced in
\cite{Borgs2014} to
further improve the greedy performance; algorithms using
RIS include \cite{Borgs2014,Tang2014,Tang2015,Nguyen2016a,Nguyen2016};
the current state-of-the-art are the SSA \cite{Nguyen2016a} and the IMM algorithm \cite{Tang2015}
to which we compare in this paper.
Cohen et al. \cite{Cohen2014a} introduced a 
methodology for a highly scalable IM algorithm SKIM;
in this work, we extend this methodology to solve the TAP
problem with the triggering model and external influence.

As compared with IM, 
much less effort has been devoted to scalable solutions
to TAP while maintaining performance guarantees;
Goyal et al. \cite{Goyal2013} studied the TAP problem 
with monotonic and submodular models of influence propagation;
their bicriteria guarantees differ from ours, and provide
no method of efficient sampling required for scalability.
Chen et al. \cite{Chen2015}  considered external influence in a viral marketing context.
However, their model of external influence is much less
general than ours. Furthermore, they restrict external
influence to only pass through seed consumers and
have no discussion of sampling, scalability,
or the TAP problem. Nguyen et al. \cite{Nguyen2010,Nguyen2011}
studied methods to restrain propagation in social networks.
\vspace{-0.3cm}
\section{Conclusion}
We establish equivalency between the triggering
model and generalized reachability, which allows
incorporation of external influence into our efficient
sampling techniques. We gave precise trade-off between
accuracy and running time with a bound on the number of 
samples required to solve TAP. Our algorithm is highly
scalable and outperforms adaptations to TAP of the current
state-of-the-art algorithm for the IM problem.

\vspace{-0.1in}
\section*{Acknowledgment}
\vspace{-0.05in}
This work is supported in part by the NSF grant \#CCF-1422116.
\renewcommand{\baselinestretch}{.87}
\bibliographystyle{unsrt}
\bibliography{refs,refs2}

\end{document}